  \definecolor{darkblue}{rgb}{0.2,0.2,0.6}
  \definecolor{darkteal}{rgb}{0.14,0.36,0.38}
  \definecolor{darkbrown}{rgb}{0.39,0.26,0.13}
  \newtheorem{theorem}{Theorem}[section]
  \newtheorem{lemma}[theorem]{Lemma}
  \newtheorem{observation}[theorem]{Observation}
  \newtheorem{proposition}[theorem]{Proposition}
  \newtheorem{corollary}[theorem]{Corollary}
  \newtheorem{claim}[theorem]{Claim}
  \theoremstyle{remark}
    \def\\{}%
    \def\texttt#1{<#1>}%
\newcommand{\defproblem}[3]{
  \vspace{2mm}
  \begin{center}
  \noindent\fbox{
  \begin{minipage}{0.9\textwidth}
  \textsc{#1}

  \smallskip
  \noindent
  {\bf{Input:}} #2
  
  \smallskip
  \noindent
  {\bf{Output:}} #3
  \end{minipage}
  }
  \end{center}
  \vspace{2mm}
}
\newcommand{\defproblemq}[3]{
  \vspace{2mm}
  \begin{center}
  \noindent\fbox{
  \begin{minipage}{0.8\textwidth}
  \textsc{#1}

  \smallskip
  \noindent
  {\bf{Input:}} #2
  
  \smallskip
  \noindent
  {\bf{Query:}} #3
  \end{minipage}
  }
  \end{center}
  \vspace{2mm}
}
  \newcommand{\ceil}[1]{\left\lceil #1 \right\rceil}
  \newcommand{\Oh}{\mathcal{O}}
  \newcommand{\cO}{\mathcal{O}}
  \newcommand{\dd}{\mathinner{.\,.}}
  \newcommand{\Borders}{\mathit{Borders}}
  \newcommand{\lcp}{\mathsf{lcp}}
  \newcommand{\lcs}{\mathsf{lcs}}
  \newcommand{\SeedSets}{\mathit{SeedSets}}
  \newcommand{\LCovP}{\textsc{CoveredPref}}
  \newcommand{\SeededPref}{\textsc{SeededBasicPref}}
  \newcommand{\IsCover}{\textsc{IsCover}}
  \newcommand{\Cov}{\mathit{Cov}}
  \newcommand{\C}{\allcov}
  \newcommand{\per}{\mathsf{per}}
  \newcommand{\expon}{\mathsf{exp}}
  \newcommand{\run}{\mathsf{run}}
  \newcommand{\result}{\mathsf{result}}
  \newcommand{\shcov}{\textsc{MinCover}}
  \newcommand{\allcov}{\textsc{AllCovers}}
  \newcommand{\T}{\mathcal{T}}
\begin{document}

\title{Internal Quasiperiod Queries}

\author[1]{Maxime Crochemore}
\author[1]{Costas S.\ Iliopoulos}
\author[2]{Jakub Radoszewski}
\author[2]{Wojciech~Rytter}
\author[2]{Juliusz Straszy{\'n}ski}
\author[2]{Tomasz Wale{\'n}}
\author[2]{Wiktor Zuba}

\affil[1]{
Department of Informatics, King's College London, UK\\\texttt{[maxime.crochemore,c.iliopoulos]@kcl.ac.uk}
}
\affil[2]{
Institute of Informatics, University of Warsaw, Poland\\
\texttt{[jrad,rytter,jks,walen,w.zuba]@mimuw.edu.pl}
}

\maketitle

\setcounter{footnote}{0} 

\begin{abstract}
Internal pattern matching requires one to answer queries about factors of a given string. Many results are known on answering internal period queries, asking for the periods of a given factor. In this paper we investigate (for the first time) internal queries asking for covers (also known as quasiperiods) of a given factor.  We propose a data structure that answers such queries in $\Oh(\log n \log \log n)$ time for the shortest cover and in $\Oh(\log n (\log \log n)^2)$ time for a representation of all the covers, after $\Oh(n \log n)$ time and space preprocessing.
\end{abstract}

\section{Introduction}
A \emph{cover} (also known as a quasiperiod) is a weak version of a period. It is a factor of a text $T$ whose occurrences cover all positions in $T$; see \cref{fig:ex_cover}.
The notion of cover is well-studied in the off-line model.
Linear-time algorithms for computing the shortest cover and all the covers of a string of length $n$ were proposed in \cite{DBLP:journals/ipl/ApostolicoFI91} and \cite{DBLP:journals/ipl/MooreS94,DBLP:journals/ipl/MooreS95}, respectively.
Moreover, linear-time algorithms for computing shortest and longest covers of all prefixes of a string are known; see \cite{DBLP:journals/ipl/Breslauer92} and \cite{DBLP:journals/algorithmica/LiS02}, respectively.
Covers were also studied in parallel~\cite{DBLP:journals/iandc/BerkmanIP95,DBLP:journals/ipl/Breslauer94} and streaming~\cite{DBLP:conf/cpm/GawrychowskiRS19} models of computation. Definitions of other variants of quasiperiodicity can be found in the survey~\cite{Czajka}. In this work we introduce covers to the internal pattern matching model~\cite{DBLP:conf/soda/KociumakaRRW15}.

\begin{figure}[htpb]
   \begin{center}
    \begin{tikzpicture}[scale=0.53]
\draw (-1,0) node[above] {$T:$};
\foreach \x in {1,3,4,6,8,9,11,13} {
\draw[thick,darkblue] (\x/2,0) node [above] {{a}};
}
\foreach \x in {2,5,7,10,12} {
\draw[thick,darkblue] (\x/2,0) node [above] {{b}};
}

\foreach \x in {1,4,9} {
   \draw[thick,olive,yshift=0.2cm] (\x/2-0.2,0.5) -- (\x/2-0.2,0.6) -- (\x/2+1.2,0.6) -- (\x/2+1.2,0.5);
}
\foreach \x in {6,11} {
   \draw[thick,olive,yshift=0.5cm] (\x/2-0.2,0.5) -- (\x/2-0.2,0.6) -- (\x/2+1.2,0.6) -- (\x/2+1.2,0.5);
}
\foreach \x in {2} {
   \draw[thick,cyan] (\x/2-0.2,0) -- (\x/2-0.2,-0.1) -- (\x/2+3.2,-0.1) -- (\x/2+3.2,0);
}
\foreach \x in {7} {
   \draw[thick,cyan,yshift=-0.3cm] (\x/2-0.2,0) -- (\x/2-0.2,-0.1) -- (\x/2+3.2,-0.1) -- (\x/2+3.2,0);
}
\end{tikzpicture}   
  \end{center}
    \caption{
  $\shcov(T)=aba$ is the shortest cover of $T$ and $\shcov(T[2 \dd 13])=baababa$ is the shortest cover of its  suffix
  of length 12.
    }\label{fig:ex_cover}
  \end{figure}

In the internal pattern matching model, a text $T$ of length $n$ is given in advance and the goal is to answer queries related to factors of the text.
One of the basic internal queries in texts are \emph{period queries}, that were introduced in~\cite{DBLP:conf/spire/KociumakaRRW12} (actually, internal primitivity queries were considered even earlier~\cite{DBLP:conf/spire/CrochemoreIKRRW10,DBLP:journals/tcs/CrochemoreIKRRW14}).
A period query requires one to compute all the periods of a given factor of $T$.
It is known that they can be expressed as $\Oh(\log n)$ arithmetic sequences.
The fastest known algorithm answering period queries is from~\cite{DBLP:conf/soda/KociumakaRRW15}. It uses a data structure of $\Oh(n)$ size that can be constructed in $\Oh(n)$ expected time and answers period queries in $\Oh(\log n)$ time (a deterministic construction of this data structure was given in~\cite{tomeksthesis}).
A special case of period queries are \emph{two-period queries}, which ask for the shortest period of a factor that is known to be periodic. In~\cite{DBLP:conf/soda/KociumakaRRW15} it was shown that two-period queries can be answered in constant time after $\Oh(n)$-time preprocessing. Another algorithm for answering such queries was proposed in~\cite{DBLP:journals/siamcomp/BannaiIINTT17}.

Let us denote by $\shcov(S)$ and $\allcov(S)$, respectively, the length of the shortest cover and the lengths of all covers of a string $S$. Similarly as in the case of periods, it can be shown that the set $\allcov(S)$ can be expressed as a union of $\Oh(\log |S|)$ pairwise disjoint arithmetic sequences. We consider data structures that allow to efficiently answer these queries in the internal model.

\defproblemq{Internal quasiperiod queries}
{A text $T$ of length $n$}{For any factor $S$ of $T$, compute  $\shcov(S)$ or\\
\hspace*{0.4cm} $\allcov(S)$ after efficient preprocessing of the text $T$}

Recently~\cite{DBLP:conf/walcom/CrochemoreIRRSW20} we have shown how to compute the shortest cover of each cyclic shift of a string $T$ of length $n$, that is, the shortest cover of each length-$|T|$ factor of $T^2$, in $\Oh(n \log n)$ total time. This work can be viewed as a generalization of~\cite{DBLP:conf/walcom/CrochemoreIRRSW20} to computing covers of any factor of a string.
It also generalizes the earlier works on computing covers of prefixes of a string~\cite{DBLP:journals/ipl/Breslauer92,DBLP:journals/algorithmica/LiS02}.

\paragraph{\bf Our results.}
We show that $\shcov$ and $\allcov$ queries can be answered in $\Oh(\log n \,\log \log n)$ time and $\Oh(\log n \,(\log \log n)^2)$ time, respectively, with a data structure that uses $\Oh(n \log n)$ space and can be constructed in $\Oh(n \log n)$ time. In particular, the time required to answer an $\allcov$ query is slower by only a $\mathrm{poly\,log\,log\,}n$ factor from optimal. Moreover, we show that any $m$ $\shcov$ or $\allcov$ queries can be answered off-line in $\Oh((n+m) \log n)$ and $\Oh((n+m) \log n \log \log n)$ time, respectively, and $\Oh(n+m)$ space. In particular, the former matches the complexity of the best known solution for computing shortest covers of all cyclic shifts of a string~\cite{DBLP:conf/walcom/CrochemoreIRRSW20}, despite being far more general. We assume the word RAM model of computation with word size $\Omega(\log n)$.

\paragraph{\bf Our approach.} Our main tool are \emph{seeds}, a known generalization of the notion of cover. A seed is defined as a cover of a superstring of the text~\cite{seedsnlogn}. A representation of all seeds of a string $T$, denoted here $\mathit{SeedSet}(T)$, can be computed in linear time~\cite{10.1145/3386369}.
We will frequently extract individual seeds from $\mathit{SeedSet}(T)$; each time such an auxiliary query needs $\Oh(\log \log n)$ time. Consequently, $\log \log n$ is a frequent
factor in our query times related to internal covers.

We construct a tree-structure (static range tree) of so-called {\it basic factors} of a string. For each basic factor $F$ we store a compact representation of the set $\mathit{SeedSet}(F)$. 
The crucial point is that the total length of all these factors is $\Oh(n \log n)$ and every other factor can be represented,
using the tree-structure, as a
concatenation of $\Oh(\log n)$ basic factors. Representations of seed-sets of basic factors are precomputed. 
Then, upon an internal query related to a specific factor $S$, we
decompose $S$ into concatenation of  basic factors $F_1,F_2,\dots,F_k$. 
Intuitively, the representation of the set of covers or (in easier queries) the shortest cover will be computed as a
``composition''
of $\mathit{SeedSet}(F_1),\mathit{SeedSet}(F_2),\ldots, \mathit{SeedSet}(F_k)$, followed by adjusting it to border conditions using internal pattern matching. 
To get efficiency, when quering about covers of a factor $S$,
we do not compute the whole representation of $\mathit{SeedSet}(S)$ (these representations are only precomputed for basic factors). 

Finally, several stringology tools related to properties of covers and string periodicity are used to improve $\mathtt{polylog}\,n$-factors in the query time that would result from a direct application this approach.

\section{Preliminaries}
We consider a text $T$ of length $n$ over an integer alphabet $\{0,\ldots,n^{\Oh(1)}\}$. If this is not the case, its letters can be sorted and renumbered in $\Oh(n \log n)$ time, which does not influence the preprocessing time of our data structure.

For a string $S$, by $|S|$ we denote its length and by $S[i]$ we denote its $i$th letter ($i=1,\ldots,|S|$).
By $S[i \dd j]$ we denote the string $S[i] \dots S[j]$ called a factor of $S$; it is a prefix if $i=1$ and a suffix if $j=|S|$.
A factor that occurs both as a prefix and as a suffix of $S$ is called a border of $S$. A factor is proper if it is shorter than the string itself.
A positive integer $p$ is called a period of $S$ if $S[i]=S[i+p]$ holds for all $i=1,\ldots,|S|-p$.
By $\per(S)$ we denote the smallest period of $S$.
A string $S$ is called periodic if $|S| \ge 2\per(S)$ and aperiodic otherwise.
If $S=XY$, then any string of the form $YX$ is called a cyclic shift of $S$.
We use the following simple fact related to covers.
\begin{observation}\label{obsbasic}
Let $A$, $B$, $C$ be strings such that $|A| < |B| < |C|$.
\begin{enumerate}[(a)]
\item If $A$ is a cover of $B$ and $B$ is a cover of $C$, then $A$ is a cover of $C$.
\item If $B$ is a border of $C$ and $A$ is a cover of $C$, then $A$ is a cover of $B$.
\end{enumerate}
\end{observation}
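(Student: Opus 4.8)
The plan is to prove both parts directly from the definition of a cover, using only the elementary fact that any cover of a string $S$ is in particular a border of $S$: its occurrences must cover positions $1$ and $|S|$, so it occurs both as a prefix and as a suffix of $S$.

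For part (a), I would fix an arbitrary position $j$ of $C$ and ``lift'' a covering occurrence through the two levels. Since $B$ covers $C$, there is an occurrence of $B$ in $C$ at some starting position $i$ with $i \le j \le i + |B| - 1$; let $j' = j - i + 1$ be the corresponding position inside this occurrence of $B$. Since $A$ covers $B$, some occurrence of $A$ inside $B$ covers $j'$, and translating it by $i - 1$ gives an occurrence of $A$ inside $C$ that covers $j$. As $j$ was arbitrary, every position of $C$ is covered by an occurrence of $A$; together with the fact that $A$ is a factor of $C$ (it is a prefix of $B$, which is a prefix of $C$), this shows $A$ is a cover of $C$. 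I expect no difficulty here — it is pure index bookkeeping.

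For part (b), the key preliminary step is to observe that $A$ is simultaneously a prefix and a suffix of $B$: being a cover of $C$, $A$ is a prefix and a suffix of $C$; $B$ is a prefix and a suffix of $C$ since it is a border; and since $|A| < |B|$, the length-$|A|$ prefix (resp.\ suffix) of $C$ coincides with the length-$|A|$ prefix (resp.\ suffix) of $B$. Writing $B = C[1 \dd |B|]$, I then fix a position $j$ of $B$ and split on whether $j$ lies in the last $|A|$ positions of $B$. If $j > |B| - |A|$, the occurrence of $A$ as a suffix of $B$ already covers $j$. If $j \le |B| - |A|$, pick an occurrence of $A$ in $C$ covering $j$, starting at some position $i \le j$; then it ends at $i + |A| - 1 \le j + |A| - 1 \le |B| - 1$, so the whole occurrence lies inside $C[1 \dd |B|] = B$ and covers $j$ there. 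Either way $j$ is covered by an occurrence of $A$ contained in $B$, and since $A$ is also a factor of $B$, it is a cover of $B$.

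The only point that requires a little care — and the closest thing to an obstacle — is in part (b): a covering occurrence of $A$ in $C$ that hits a position of the prefix $B$ might a priori stick out past position $|B|$, and then it is not an occurrence of $A$ \emph{in} $B$. This is exactly why the argument restricts to the sub-range $[1, |B| - |A|]$, where sticking out is impossible, and disposes of the remaining $|A|$ positions separately via the suffix occurrence of $A$ inside $B$. Everything else reduces to routine inequalities between indices.
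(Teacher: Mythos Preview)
Your argument is correct in both parts; the index bookkeeping in (a) and the case split in (b) handling the potential ``sticking out'' via the suffix occurrence of $A$ in $B$ are exactly the right moves. The paper itself offers no proof of this observation---it is stated as a simple fact and left to the reader---so there is nothing to compare against, but your write-up is a clean and complete justification.
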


Below we list several algorithmic tools used later in the paper.

\subsection{Queries Related to Suffix Trees and Arrays}
A range minimum query on array $A[1 \dd n]$ requires to compute $\min\{A[i],\ldots,A[j]\}$.

\begin{lemma}[\cite{DBLP:conf/latin/BenderF00}]\label{RMQ}
Range minimum queries on an array of size $n$ can be answered in $\Oh(1)$ time after $\Oh(n)$-time preprocessing.
\end{lemma}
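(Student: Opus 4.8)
The plan is to combine two classical reductions with a table-lookup (``Four Russians'') step. First I would recall the sparse-table construction: precompute, for every index $i$ and every $k \ge 0$ with $2^k \le n$, the position $M[i][k]$ of a minimum of $A[i \dd i+2^k-1]$, via the recurrence that sets $M[i][k]$ to whichever of $M[i][k-1]$ and $M[i+2^{k-1}][k-1]$ points to the smaller value. A query on $[i,j]$ is then answered in $\Oh(1)$ by taking $k=\floor{\log_2(j-i+1)}$ and comparing $A[M[i][k]]$ with $A[M[j-2^k+1][k]]$; the two windows together cover $[i,j]$ (they may overlap, which is harmless for minima). This already gives $\Oh(1)$ query time, but $\Oh(n\log n)$ preprocessing, so the remaining task is to bring the preprocessing down to $\Oh(n)$.

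Second, block decomposition. Partition $A$ into consecutive blocks of length $b = \tfrac12\floor{\log_2 n}$, form the array $A'$ of the $\Oh(n/b)$ block minima, and run the sparse-table construction on $A'$; since $(n/b)\log(n/b)=\Oh(n)$, this costs only $\Oh(n)$. An arbitrary query $[i,j]$ then decomposes into a suffix of the block of $i$, a contiguous run of whole blocks (handled by the sparse table on $A'$ together with precomputed in-block prefix- and suffix-minima), and a prefix of the block of $j$; so it only remains to answer RMQ inside a single length-$b$ block in $\Oh(1)$.

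Third, handle the in-block queries, for which a direct ``tabulate every block'' idea fails because over a large alphabet a length-$b$ block can look $n^{\Omega(1)}$ different ways. The fix is to reduce general RMQ to the $\pm1$ special case. Build the Cartesian tree of $A$ in $\Oh(n)$ time by the standard incremental stack algorithm; then $\mathrm{RMQ}_A(i,j)$ is the lowest common ancestor of the $i$th and $j$th nodes, which via an Euler tour (of length $2n-1$) of the tree becomes a range-minimum query on the depth array $D$ — an array whose consecutive entries differ by exactly $\pm1$. Applying the block decomposition above to $D$, each block is now determined up to an additive constant by its sequence of $b-1$ steps in $\{-1,+1\}$, so there are at most $2^{b-1}=\Oh(\sqrt n)$ block shapes; precomputing, for each shape and each of the $\Oh(b^2)$ offset pairs, the position of a minimum gives a table of size $\Oh(\sqrt n\,\log^2 n)=o(n)$, built in the same time. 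An in-block query is then one lookup indexed by the block's shape and the two offsets. Combining the three ingredients yields $\Oh(1)$ query time after $\Oh(n)$ preprocessing.

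I expect the third step to be the main obstacle: the Cartesian-tree / Euler-tour detour is precisely what drives the effective block alphabet down to $\{-1,+1\}$, and the two equivalences it relies on — RMQ on $A$ versus LCA on the Cartesian tree, and LCA versus $\pm1$-RMQ on the Euler-tour depth array — together with the linear-time Cartesian-tree construction, are the technical core; the sparse table and the block bookkeeping are routine.
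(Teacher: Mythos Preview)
Your proposal is correct and is essentially the standard Bender--Farach-Colton argument: sparse table on block minima, Cartesian-tree reduction of general RMQ to LCA, Euler-tour reduction of LCA to $\pm1$-RMQ, and Four-Russians tabulation of the $\Oh(\sqrt n)$ block shapes. There is nothing to compare against, however: the paper does not prove this lemma at all but merely cites it as a known result from~\cite{DBLP:conf/latin/BenderF00}, so your write-up actually supplies strictly more detail than the paper does.
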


By $\lcp(i,j)$ ($\lcs(i,j)$) we denote the length of the longest common prefix of $T[i \dd n]$ and $T[j \dd n]$ (longest common suffix of $T[1 \dd i]$ and $T[1 \dd j]$, respectively). Such queries are called longest common extension (LCE) queries. The following lemma is obtained by using range minimum queries on suffix arrays.

\begin{lemma}[\cite{DBLP:conf/latin/BenderF00,DBLP:journals/jacm/KarkkainenSB06}]\label{LCE}
After $\Oh(n)$-time preprocessing, one can answer LCE queries for $T$ in $\Oh(1)$ time.
\end{lemma}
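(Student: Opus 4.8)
The plan is to realize LCE queries through the classical combination of a suffix array, a longest-common-prefix (LCP) array, and a range-minimum-query structure. For the preprocessing, I would first build the suffix array $\mathsf{SA}[1 \dd n]$ of $T$, the permutation listing the starting positions of the suffixes of $T$ in increasing lexicographic order, together with its inverse $\mathsf{ISA}$ (so $\mathsf{ISA}[i]$ is the rank of $T[i \dd n]$); since the alphabet is $\{0,\dots,n^{\Oh(1)}\}$, this takes $\Oh(n)$ time~\cite{DBLP:journals/jacm/KarkkainenSB06}. Next I would compute $\mathsf{LCP}[2 \dd n]$, where $\mathsf{LCP}[k]$ is the length of the longest common prefix of the two suffixes of ranks $k-1$ and $k$; given $\mathsf{SA}$, $\mathsf{ISA}$ and $T$, this is Kasai et al.'s algorithm, running in $\Oh(n)$ time by the amortized observation that this length decreases by at most one when one moves from text position $i$ to $i+1$. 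Finally I would build the $\Oh(1)$-time range-minimum structure of \cref{RMQ} over $\mathsf{LCP}$, which completes the $\Oh(n)$-time preprocessing.

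To answer $\lcp(i,j)$ for $i \ne j$, I would set $a = \mathsf{ISA}[i]$, $b = \mathsf{ISA}[j]$, swap them so that $a < b$, and return $\min\{\mathsf{LCP}[a+1],\dots,\mathsf{LCP}[b]\}$ via a single range-minimum query (and return $n-i+1$ if $i = j$). Correctness follows from the monotonicity of LCPs along the sorted order: whenever three suffixes satisfy $S_1 \le S_2 \le S_3$ lexicographically one has $\mathsf{lcp}(S_1,S_3) = \min(\mathsf{lcp}(S_1,S_2),\mathsf{lcp}(S_2,S_3))$, and iterating this over the consecutive suffixes of ranks $a,a+1,\dots,b$ yields the claimed formula. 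For $\lcs(i,j)$ I would maintain a second copy of the whole data structure built on the reverse string $T^{R}$, observing that a longest common suffix of $T[1 \dd i]$ and $T[1 \dd j]$ equals a longest common prefix of the suffixes of $T^{R}$ starting at positions $n-i+1$ and $n-j+1$; this handles $\lcs$ queries with the same bounds.

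The only genuinely non-trivial ingredient I expect is the linear-time suffix array construction over an integer alphabet of polynomial size (the skew/DC3 recursion of~\cite{DBLP:journals/jacm/KarkkainenSB06}); the LCP computation, the use of \cref{RMQ}, and the monotonicity argument are all routine, so that step is where I would simply invoke the cited result rather than reprove it.
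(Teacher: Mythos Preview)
Your proposal is correct and matches the paper's approach exactly: the paper does not give a proof for this lemma but simply cites it and remarks that it ``is obtained by using range minimum queries on suffix arrays,'' which is precisely the suffix array + LCP array + RMQ construction you spell out.
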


The suffix tree of $T$, denoted as $\T(T)$, is a compact trie of all suffixes of $T$. Each implicit or explicit node of $\T(T)$ corresponds to a factor of $T$, called its \emph{string label}.
The \emph{string depth} of a node of $\T(T)$ is the length of its string label.

We use \emph{weighted ancestor (WA) queries} on a suffix tree. Such queries, given an explicit node $v$ and an integer value $\ell$ that does not exceed the string depth of $v$, ask for the highest explicit ancestor $u$ of $v$ with string depth at least $\ell$.

\begin{lemma}[\cite{DBLP:journals/talg/AmirLLS07,10.1145/3386369}]\label{WAQ}
Let $\T(T)$ be the suffix tree of $T$. WA queries on $\T(T)$ can be answered in $\gamma_n = \Oh(\log \log n)$ 
time after $\Oh(n)$-time preprocessing. Moreover, any $m$ WA queries on $\T(T)$ can be answered off-line in $\Oh(n+m)$ time.
\end{lemma}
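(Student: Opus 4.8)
The plan is to reduce a weighted ancestor query to predecessor search and then exploit the fact that all string depths lie in the bounded universe $[1,n]$. I would first compute a heavy-path decomposition of $\T(T)$, so that every root-to-node path is cut into $\Oh(\log n)$ maximal heavy paths and along any single heavy path the string depths of the explicit nodes form a strictly increasing sequence of integers in $[1,n]$. For each heavy path I store these depths in a static successor structure over $[1,n]$ (a van~Emde~Boas layout or y-fast trie), answering one search in $\Oh(\log\log n)$ time; because all keys are integers bounded by $n$, all these structures over all heavy paths can be built together in $\Oh(n)$ time and space by radix-sorting. To answer $(v,\ell)$ I start from the heavy path $P$ containing $v$: if the string depth of the parent of the top of $P$ is $<\ell$, then no proper ancestor above $P$ qualifies, so the answer is the shallowest node of $P$ with depth $\ge\ell$, found by one successor search (it exists since $\ell$ does not exceed the string depth of $v$); otherwise $\ell$ is at most that parent's depth, the answer lies strictly above $P$, and I recurse from the heavy path containing that parent. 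This already yields an $\Oh(\log n\cdot\log\log n)$-time query with $\Oh(n)$ preprocessing.

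The step I expect to be the real obstacle is shaving the $\Oh(\log n)$ factor, i.e.\ avoiding walking through all $\Oh(\log n)$ heavy paths. A root-to-node path meets $\Oh(\log n)$ heavy-path heads, and the ``threshold depths'' (depths of the parents of those heads) along it are themselves increasing, so identifying the correct heavy path is once more a predecessor problem; the point is to make it cost $\Oh(\log\log n)$ in total rather than per head. I would do this by either (i) precomputing, for the $\Oh(n/\log n)$ ``long'' heavy paths only, ladder/jump pointers in the spirit of Amir--Landau--Lewenstein--Sokol, using that short heavy paths are traversed directly in $\Oh(\log\log n)$ total; or (ii) passing through the suffix array: the string label of $v$ corresponds to a maximal range $[a_v,b_v]$ of the suffix array, and $\wanc(v,\ell)$ is exactly the locus of the length-$\ell$ prefix of that label, obtained by extending $[a_v,b_v]$ leftward and rightward as far as the pairwise LCP stays $\ge\ell$ --- that is, two ``nearest position with LCP-value $<\ell$'' queries on the LCP array, each reducible to a single predecessor search on a fixed decomposition (e.g.\ a level-ancestor/Cartesian-tree layout) of the LCP array. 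Either route reduces the whole query to $\Oh(1)$ predecessor searches, hence $\Oh(\log\log n)$ time, still with $\Oh(n)$ preprocessing.

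For the offline bound I observe that, after the reduction above, each of the $m$ queries triggers only $\Oh(1)$ predecessor searches, each attributed to a single heavy path (equivalently, a single block of the LCP array), and every key or query value is an integer in $[0,n]$. I therefore collect the searches directed at each structure, radix-sort the structure's keys together with those query values, and sweep to answer the whole batch of size $m_i$ over a structure of size $n_i$ in $\Oh(n_i+m_i)$ time. Since the heavy paths partition the $\Oh(n)$ explicit nodes of $\T(T)$, the total is $\sum_i \Oh(n_i+m_i)=\Oh(n+m)$. Correctness throughout rests on the two invariants already used: string depths increase down every heavy path, and $\wanc(v,\ell)$ is the unique shallowest ancestor of $v$ of string depth $\ge\ell$, so once the correct heavy path (or LCP block) is pinned down a successor search locates the node exactly.
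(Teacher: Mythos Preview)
The paper does not prove this lemma; it is quoted from the two cited references, so there is no in-paper argument to compare against. Your sketch correctly identifies the standard reduction (heavy-path decomposition plus predecessor search over the bounded universe $[1,n]$) and correctly isolates where the real difficulty lies, namely getting from $\Oh(\log n)$ down to $\Oh(1)$ predecessor searches per query.

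That step, however, is where your argument has a genuine gap. Route~(ii) is circular: for a position $i$ in the LCP array and a threshold $\ell$, the query ``nearest $j<i$ with $\mathit{LCP}[j]<\ell$'' is precisely a weighted-ancestor query on the Cartesian tree of the LCP array, and that Cartesian tree is (up to relabelling) the internal-node structure of $\T(T)$ itself --- so you have reduced weighted ancestor on the suffix tree to weighted ancestor on the suffix tree. Route~(i) points in the right direction, but ``ladder/jump pointers in the spirit of Amir--Landau--Lewenstein--Sokol'' is exactly the non-trivial content of the first citation; without spelling out how the $\Oh(\log n)$ light-edge thresholds along a root-to-$v$ path are located in $\Oh(\log\log n)$ total time, the heart of the online bound remains unproved. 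Your offline argument then inherits this gap, since it presupposes the $\Oh(1)$-predecessor reduction; with only the naive $\Oh(\log n)$-heavy-path walk available, your batching yields $\Oh(n+m\log n)$ rather than $\Oh(n+m)$.
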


\subsection{Internal Pattern Matching (IPM)}
The data structure for IPM queries is built upon a text $T$ and allows efficient
location of all occurrences of one factor $X$ of $T$ inside another factor $Y$ of $T$, where $|Y| \le 2|X|$.

\begin{lemma}[\cite{DBLP:conf/soda/KociumakaRRW15}]\label{IPM} The result of an IPM query is a single arithmetic sequence. After linear-time preprocessing one can answer IPM queries for $T$ in $\Oh(1)$ time. 
\end{lemma}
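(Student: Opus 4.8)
The lemma has two parts: that the answer is a single arithmetic sequence, and that it can be produced in $\Oh(1)$ time after linear preprocessing. I would establish them in that order.

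\emph{Shape of the answer.} Fix $X$ and $Y=T[i\dd j]$ with $|Y|\le 2|X|$, and let $\Occ$ be the set of starting positions of $X$ inside $Y$. Every occurrence spans $|X|\ge|Y|/2$ consecutive positions of $Y$, so $\max\Occ-\min\Occ\le|Y|-|X|\le|X|$; hence the difference of any two occurrences is at most $|X|$ and is therefore a period of $X$. If $\per(X)>|X|/2$, every such difference exceeds $|X|/2$, so $|\Occ|\le 2$ and $\Occ$ is trivially an arithmetic sequence. Otherwise set $p=\per(X)\le|X|/2$; the occurrence at $\min\Occ$ lies in a unique maximal fragment $T[\ell\dd r]$ of period $p$, and by the periodicity lemma no occurrence of $X$ leaves this fragment, while the occurrences inside it are exactly the positions of $[\ell,r-|X|+1]$ congruent to $\min\Occ$ modulo $p$. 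Intersecting with $[i,j-|X|+1]$ yields one arithmetic sequence of common difference $p$; this also pins down what the data structure must output.

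\emph{Reduction and the periodic case.} With $k=\lceil\log_2|X|\rceil$ we have $2^{k-1}<|X|\le 2^k$ and $|Y|\le 2^{k+1}$, so it suffices to build, for each $k$, a structure handling queries whose $|X|$ lies in bracket $k$ (the query selects $k$ from $|X|$; for $|X|=\Oh(1)$ the answer is tabulated). Fix the bracket and a parameter $\tau=\Theta(2^k)$ with $|X|\ge 4\tau$. A constant number of $\lcp$/$\lcs$ queries determines whether $X$ has a period $\le|X|/2$ and, if so, recovers $p=\per(X)$ and the maximal period-$p$ fragment $T[\ell\dd r]\supseteq X$ (equivalently, a two-period query, answerable in $\Oh(1)$ after linear preprocessing via the runs of $T$~\cite{DBLP:conf/soda/KociumakaRRW15,DBLP:journals/siamcomp/BannaiIINTT17}). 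In that case the previous paragraph gives the answer in $\Oh(1)$.

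\emph{The aperiodic case (the crux).} When $X$ has no short period it occurs at most twice in $Y$, yet we must still locate those occurrences in $\Oh(1)$ time. The plan is to precompute, for every bracket $k$, an \emph{anchor set} $\mathsf A_k$ of positions of $T$ — a string synchronizing set — with (i) \emph{sparsity}: $\Oh(n/2^k)$ anchors in total, and only $\Oh(1)$ of them inside any window of $\Oh(\tau)$ positions; and (ii) \emph{consistency}: membership in $\mathsf A_k$ is determined by an $\Oh(\tau)$-length context, so that whenever a factor's length-$2\tau$ prefix has period $>\tau/3$, every occurrence window contains an anchor at an offset depending only on that prefix. Then, letting $p=\per(X[1\dd 2\tau])$: if $p\le\tau/3$, every occurrence of $X$ has its length-$2\tau$ prefix inside a maximal period-$p$ run of $T$; since two runs of the same period overlap in fewer than $p$ positions and every relevant run has length $\ge 2\tau$, only $\Oh(1)$ such runs meet $Y$'s window, and within each run either $X$ has period $p$ throughout (return the progression as above) or $X$'s periodicity breaks at a fixed offset, anchoring all occurrences at the run's end — so $\Oh(1)$ candidates, each checked by two $\lcp$/$\lcs$ queries. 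If $p>\tau/3$, then $X$ contains an anchor of $\mathsf A_k$ at a fixed offset $\delta$, so every occurrence of $X$ in $T$ is aligned with an anchor; $Y$'s window (length $\Oh(\tau)$) contains only $\Oh(1)$ anchors of $\mathsf A_k$ by (i); we enumerate them, shift each by $-\delta$, and verify the $\Oh(1)$ candidates. Every branch costs $\Oh(1)$.

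\emph{Main obstacle.} The real work — and the step I expect to be the hardest — is constructing all $\Oh(\log n)$ anchor sets deterministically in $\Oh(n)$ \emph{total} time (not $\Oh(n)$ per set) while simultaneously guaranteeing the sparsity~(i) and the context-consistency~(ii) above, supporting ``report the $\Oh(1)$ anchors of $\mathsf A_k$ inside a given window'' in $\Oh(1)$ time (the anchors of one window fit in a machine word, so a packed predecessor search suffices), and merging the periodic-prefix and anchored sub-cases cleanly. All of this is carried out in~\cite{DBLP:conf/soda/KociumakaRRW15}.
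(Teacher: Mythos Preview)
The paper does not prove this lemma at all: it is stated with a citation to~\cite{DBLP:conf/soda/KociumakaRRW15} and used as a black box, so there is no ``paper's own proof'' to compare against. Your write-up is a faithful high-level sketch of the construction in the cited reference (arithmetic-sequence shape via the periodicity lemma, periodic/aperiodic split, and synchronizing/anchor sets for the aperiodic case), and you yourself acknowledge that the actual work is carried out there; as an outline of that argument it is accurate, though of course the deterministic linear-time construction of the anchor sets and the accompanying $\Oh(1)$-time access machinery are substantial and are only asserted here rather than established.
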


A period query, for a given factor $X$ of text $T$, returns a compact representation of all the periods of $X$ (as a set of $\Oh(\log n)$ arithmetic sequences). 

\begin{lemma}[\cite{DBLP:conf/soda/KociumakaRRW15}]\label{PQ}
After $\Oh(n)$ time and space preprocessing, for any factor of $T$ we can answer a period query in $\Oh(\log n)$ time. 
\end{lemma}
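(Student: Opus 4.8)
The plan is to reduce a period query on a factor $X = T[a \dd b]$, with $m := b - a + 1$, to locating the borders of $X$, exploiting the bijection $p \leftrightarrow m - p$ between periods and borders of $X$ (the trivial period $m$ aside), and to read off those borders \emph{scale by scale}. First I would partition the proper border lengths $[1, m-1]$ into the $\Oh(\log m)$ dyadic ranges $[2^k, 2^{k+1})$ for $k = 0, \ldots, \floor{\log_2 m}$, and show that each range contributes a single arithmetic progression of border lengths; this simultaneously re-derives the stated bound of $\Oh(\log n)$ arithmetic sequences on the output. Fix a range and let $U_k = X[1 \dd 2^k]$ and $V_k = X[m - 2^k + 1 \dd m]$ be the length-$2^k$ prefix and suffix of $X$.

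The structural claim I would establish is: for $\ell \in [2^k, 2^{k+1}) \cap [1, m-1]$, the prefix $X[1\dd\ell]$ is a border of $X$ if and only if $U_k$ occurs in $X$ starting at position $m - \ell + 1$ and $V_k$ occurs in $X$ starting at position $\ell - 2^k + 1$. The forward direction is immediate from $X[1\dd\ell]$ being both a prefix and a suffix of $X$. For the converse, the two postulated occurrences yield $X[i] = X[m - \ell + i]$ for all $i \in [1, 2^k]$ and for all $i \in [\ell - 2^k + 1, \ell]$; since $\ell < 2^{k+1}$ forces $\ell - 2^k + 1 \le 2^k$, these two intervals already cover $[1, \ell]$, which is exactly the identity $X[1\dd\ell] = X[m - \ell + 1 \dd m]$. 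Now the admissible starting positions of $U_k$ lie in a window of length at most $2^k$, i.e. inside a factor of $T$ of length less than $2\,|U_k|$, so by \cref{IPM} the set of such occurrences — and hence the induced set of candidate border lengths $\ell$ — is a single arithmetic progression, computed in $\Oh(1)$ time; the same holds for $V_k$. Consequently the border lengths in the range equal the intersection of these two arithmetic progressions, which is again an arithmetic progression.

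It then remains to intersect the two progressions in each of the $\Oh(\log m)$ ranges, translate each surviving border length $\ell$ to the period $m - \ell$, and output the union of the resulting $\Oh(\log n)$ arithmetic sequences, taking care of the top range $2^k \le m < 2^{k+1}$ (where the relevant windows still have length $< 2\,|U_k|$, so one IPM query each still suffices) and the trivial cases. The only preprocessing invoked is the linear-time IPM data structure of \cref{IPM}, with \cref{LCE} on hand for the few corner cases, so the $\Oh(n)$ time and space bounds follow. I expect the genuine obstacle to be the efficiency of the bookkeeping rather than correctness: intersecting and merging arithmetic progressions involves $\gcd$ computations, and performing all of them within $\Oh(\log n)$ \emph{total} time — as opposed to the $\Oh(\log^2 n)$ that the plan yields naively — requires the sharper amortization and progression arithmetic of~\cite{DBLP:conf/soda/KociumakaRRW15}; the conceptual heart is the structural claim above, namely that the borders split into $\Oh(\log m)$ IPM-generated arithmetic progressions.
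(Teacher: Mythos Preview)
The paper does not prove this lemma at all: it is stated with a citation to~\cite{DBLP:conf/soda/KociumakaRRW15} and used as a black box. Your sketch is in fact the approach of the cited reference --- the dyadic partition of border lengths, the reduction of each scale to a constant number of IPM queries via the prefix/suffix covering argument, and the translation back from borders to periods --- so in that sense you have correctly reconstructed the intended proof.

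One remark on the gap you flag yourself. The intersection of the two IPM-returned progressions at each scale can be done in $\Oh(1)$ time without a general $\gcd$: if either progression has at most two elements, brute force suffices; if both have at least three, then both differences equal $\per(U_k)$ (respectively $\per(V_k)$), and since any two border lengths $\ell_1<\ell_2$ in $[2^k,2^{k+1})$ force $\ell_2-\ell_1$ to be a period of both $X[1\dd\ell_2]$ and $X[m-\ell_2+1\dd m]$, one can show the two differences coincide (or one divides the other and the intersection collapses). This is the ``sharper progression arithmetic'' you allude to, and it removes the extra $\log n$ factor directly rather than by amortization. With that refinement your plan gives the claimed $\Oh(\log n)$ query time.
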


The data structures of \cref{IPM,PQ} are constructed in $\Oh(n)$ expected time. These constructions were made worst-case in \cite{tomeksthesis}.

\subsection{Static Range Trees}
A \emph{basic interval} is an interval $[a\dd a+2^i)$ such that $2^i$ divides $a-1$. We assume w.l.o.g.\ that $n$ is a power of two. We consider a static range tree structure whose nodes correspond to basic subintervals of $[1\dd n]$ and a non-leaf node has children corresponding to the two halves of the interval. (See e.g.\ \cite{DBLP:conf/lata/KociumakaRRSWZ19}). The total number of basic intervals is $\Oh(n)$. Using the tree, every interval $[i\dd j]$ can be decomposed into $\Oh(\log n)$ pairwise disjoint basic intervals. The decomposition can be computed in $\Oh(\log n)$ time by inspecting the paths from the leaves corresponding to $i$ and $j$ to their lowest common ancestor.
A \emph{basic factor} of $T$ is a factor that corresponds to positions from a basic interval. 

\subsection{Seeds}
We say that a string $S$ is a \emph{seed} of a string $U$ if $S$ is a factor of $U$ and $S$ is a cover of a string $U'$ such that $U$ is a factor of $U'$; see \cref{fig2}.
The second point of the lemma below follows from \cref{WAQ}.

\begin{figure}[htpb]
     \begin{center}
\begin{tikzpicture}[scale=0.53]
\clip (-0.125,-0.41) rectangle (8.225,0.96);
\foreach \x in {0,1,3,4,6,8,9,11,13,14,16} {
\draw (\x/2,0) node [above] {{a}};
}
\foreach \x in {2,5,7,10,12,15} {
\draw (\x/2,0) node [above] {{b}};
}
\
\foreach \x in {-2,4,9,14} {
   \draw[brown!60!black,thick,yshift=0.25cm] (\x/2-0.2,0.5) -- (\x/2-0.2,0.6) -- (\x/2+1.2,0.6) -- (\x/2+1.2,0.5);
}
\foreach \x in {1,6,11,16} {
   \draw[olive,thick] (\x/2-0.2,0) -- (\x/2-0.2,-0.1) -- (\x/2+1.2,-0.1) -- (\x/2+1.2,0);
}
\end{tikzpicture}
  \end{center}
   \begin{center}
    \begin{tikzpicture}[scale=0.53]
\clip (-0.125,-0.41) rectangle (8.225,0.96);
\foreach \x in {0,1,3,4,6,8,9,11,13,14,16} {
\draw (\x/2,0) node [above] {{a}};
}
\foreach \x in {2,5,7,10,12,15} {
\draw (\x/2,0) node [above] {{b}};
}
\
\foreach \x in {-2,6,14} {
   \draw[olive,thick,yshift=0.15cm] (\x/2-0.2,0.5) -- (\x/2-0.2,0.6) -- (\x/2+2.2,0.6) -- (\x/2+2.2,0.5);
}
\foreach \x in {1,11} {
   \draw[brown!60!black,thick] (\x/2-0.2,0) -- (\x/2-0.2,-0.1) -- (\x/2+2.2,-0.1) -- (\x/2+2.2,0);
}
\end{tikzpicture}
  \end{center}

    \caption{\label{fig:ex_seed}
       The strings $aba,\, abaab$ are seeds of the given string (as well as strings $abaaba$, $abaababa$, $abaababaa$). 
    }\label{fig2}
  \end{figure}
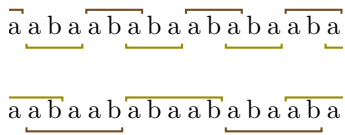

\begin{lemma}[\cite{10.1145/3386369}]\label{lem:seeds} 
\begin{enumerate}[(a)]
\item\label{xa} All the seeds of $T$ can be represented as a collection of a linear number of disjoint paths in the suffix tree $\T(T)$. Moreover, this representation can be computed in $\Oh(n)$ time if $T$ is over an integer alphabet.
\item\label{xb} After $\Oh(n)$ time preprocessing we can check if a given factor of $T$ is a seed of $T$ in $\Oh(\gamma_n)$ time.
\end{enumerate}
\end{lemma}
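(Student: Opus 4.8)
\medskip
\noindent\textbf{Proof proposal.}
The plan is to reduce ``$S$ is a seed of $T$'' to a conjunction of conditions that can be read off the suffix tree $\T(T)$, to precompute the numeric data needed to test these conditions for every explicit node in $\Oh(n)$ time, and then for part~(b) to locate the node of the queried factor by a weighted ancestor query. For a factor $S$ of length $m$ with occurrences starting at $i_1<\dots<i_k$ in $T$, one verifies that $S$ is a seed of $T$ if and only if: (i)~$i_{t+1}-i_t\le m$ for every $t$ (the span from the first to the last occurrence of $S$ is covered); (ii)~the unmatched prefix $T[1\dd i_1-1]$ is covered by copies of $S$ overhanging the left end of $T$; and (iii)~the symmetric statement for the unmatched suffix $T[i_k+m\dd n]$. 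Because nested overhangs are redundant, (ii)~is equivalent to $\Lambda(S)\ge i_1-1$, where $\Lambda(S)$ denotes the length of the longest suffix of $S$ that is a prefix of $T$; (iii)~is handled symmetrically using the longest prefix of $S$ that is a suffix of $T$.

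For part~(a), first build $\T(T)$ in $\Oh(n)$ time. Every factor of $T$ is a locus (explicit or implicit) of $\T(T)$, and along a single edge the set of occurrences does not change, so $i_1$, $i_k$ and the maximum occurrence gap $g$ are constant while the string label grows by appending letters on the right. Then condition~(i) becomes a single threshold $d\ge g$ on the edge, and conditions~(ii)--(iii) turn, via \cref{LCE} together with precomputed failure-function/border data of $T$, into $\Oh(1)$ tests involving $i_1$, $i_k$ and the current string depth $d$. A careful analysis of how these three conditions behave along the tree then shows that the seed loci form $\Oh(n)$ pairwise disjoint downward paths, which is the classical Iliopoulos--Moore--Park representation. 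The only statistics that are not immediate are $i_1(v)$, $i_k(v)$ and, above all, $g(v)$ for every explicit node $v$: the first two follow from one bottom-up pass, whereas a naive bottom-up merge of sorted occurrence lists computes all maximum occurrence gaps only in $\Oh(n \log n)$ time. Obtaining the $g(v)$ --- and the $\Lambda$-type boundary values --- in $\Oh(n)$ total time is the technical heart of the construction and is done with the suffix-tree amortization underlying the linear-time seeds algorithm.

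For part~(b), given a factor $T[i\dd j]$ we first find its locus in $\T(T)$: take the leaf for the suffix $T[i\dd n]$ and issue a weighted ancestor query for string depth $j-i+1$; by \cref{WAQ} this costs $\Oh(\gamma_n)$ and returns the explicit node at or immediately below the locus, hence the edge that contains it. Using the precomputed per-edge description of which depths are seeds, we then decide in $\Oh(1)$ whether $T[i\dd j]$ is a seed, for a total of $\Oh(\gamma_n)$ time; the preprocessing is dominated by part~(a) together with the $\Oh(n)$-time weighted-ancestor preprocessing of \cref{WAQ}.

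I expect the main obstacle to lie entirely in part~(a): pinning down the boundary conditions~(ii)--(iii) precisely (the redundancy-of-nested-overhangs argument and its interplay with the interior condition needs care) and, more substantially, bringing the maximum-occurrence-gap computation down from the straightforward $\Oh(n \log n)$ to $\Oh(n)$. Once these are in place, part~(b) is just a weighted-ancestor lookup followed by a constant-time check, exactly as noted in the remark preceding the statement.
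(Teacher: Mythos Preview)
This lemma is not proved in the paper; it is quoted from the cited reference, and the only justification offered is the sentence immediately before the statement: ``The second point of the lemma below follows from \cref{WAQ}.'' Your reduction of part~(b) to a weighted-ancestor lookup on $\T(T)$ followed by an $\Oh(1)$ membership test against the path representation from part~(a) is precisely that one-line remark spelled out, so for~(b) your proposal and the paper coincide.

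For part~(a) there is nothing in the present paper to compare against. Your outline (max-gap condition plus two boundary conditions, with the occurrence statistics constant along suffix-tree edges) is the combinatorial skeleton underlying the $\Oh(n\log n)$ algorithm of Iliopoulos, Moore and Park. One technical remark on the sketch itself: of your three conditions, (i) and~(iii) are indeed monotone in the string depth along an edge, but $\Lambda(S)$ in condition~(ii) is not---appending a letter on the right can make the longest suffix of $S$ that is a prefix of $T$ drop---so the sentence ``a careful analysis \dots\ shows that the seed loci form $\Oh(n)$ disjoint paths'' is hiding real combinatorics rather than a routine check. And, as you yourself flag, pushing the max-gap computation from $\Oh(n\log n)$ to $\Oh(n)$ is the substantial part; the construction in the cited reference is considerably more elaborate than a single suffix-tree amortisation, and filling in that step would amount to re-deriving the main theorem of that paper---which is exactly why the present paper cites the lemma instead of proving it.
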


Our main data structure is a static range tree $\SeedSets(T)$ which stores all seeds of every basic factor of $T$ represented as a collection of paths in its suffix tree. Actually, only seeds of length at most half of a string will be of interest; see \cref{seedsets}. 

\begin{figure}
    \centering
\begin{tikzpicture}[scale=1.05,blue!20!black,-, >=stealth, auto, semithick,
    level 1/.style={sibling distance=110mm},
    level 2/.style={sibling distance=55mm},
    level 3/.style={sibling distance=25mm},
    level 4/.style={sibling distance=15mm},
    tree/.style={draw,rectangle,minimum size=1mm,inner sep=4pt},
    scale=0.55,transform shape
   ] 
  \newcommand{\seedNode}[2]{\begin{tabular}{c}\Large \texttt{#1} \\ \hline \Large seeds: \texttt{#2}\end{tabular}}

   \node[tree] (r) {\seedNode{aabaababababaaba}{aba}}
   child {
       node[tree] {\seedNode{aabaabab}{aba}}
       child {
            node[tree] {\seedNode{aaba}{$\emptyset$}}
            child { node[tree] {\seedNode{aa}{a}} }
            child { node[tree] {\seedNode{ba}{$\emptyset$}} }
       }
       child {
            node[tree] {\seedNode{abab}{ab\textrm{,}ba}}
            child { node[tree] {\seedNode{ab}{$\emptyset$}} }
            child { node[tree] {\seedNode{ab}{$\emptyset$}} }
       }
   }
   child {
       node[tree] {\seedNode{ababaaba}{aba}}
       child {
            node[tree] {\seedNode{abab}{ab\textrm{,}ba}}
            child { node[tree] {\seedNode{ab}{$\emptyset$}} }
            child { node[tree] {\seedNode{ab}{$\emptyset$}} }
       }
       child {
            node[tree] {\seedNode{aaba}{$\emptyset$}}
            child { node[tree] {\seedNode{aa}{a}} }
            child { node[tree] {\seedNode{ba}{$\emptyset$}} }
       }
   };

\end{tikzpicture}
    \caption{A schematic view of tree $\SeedSets$ of $T$ (in the real data structure, seeds are stored on suffix trees of basic factors). For example, \texttt{ba} is a seed of $T[5 \dd 12]$ since it is a seed of basic factors $T[5 \dd 8]$ and $T[9 \dd 12]$ and its occurrence covers $T[8 \dd 9]$ (\cref{concatenation}).}\label{seedsets}
\end{figure}

The sum of lengths of basic factors in $T$ is $\Oh(n \log n)$.
Consequently, due to~\cref{lem:seeds},
the tree $\SeedSets(T)$ has total size $\Oh(n\log n)$ and can be computed in $\Oh(n\log n)$ time. (To use \cref{lem:seeds}\eqref{xa} we renumber letters in basic factors of $T$ via bucket sort so that the letters of a basic factor $S$ are from $\{0,\ldots,|S|^{\Oh(1)}\}$.)

\section{Internal Cover of a Given Length}
In this section we show how to use $\SeedSets(T)$ to answer internal queries related to computing the longest prefix of a factor $S$ of $T$ that is covered by its length-$\ell$ prefix. We start with the following, easier queries.

\defproblemq{Cover of a Given Length Query (\IsCover($\ell,S$))}
{A factor $S$ of $T$ and a positive integer $\ell$}{
Does $S$ have a cover of length $\ell$?
}

The following three lemmas provide the building blocks of the data structure for answering $\IsCover$ queries.

\begin{lemma}[Seed of a basic factor]\label{base}
After $\Oh(n \log n)$-time preprocessing, for any factor $C$ and basic factor $B$ of $T$ such that $2|C| \le |B|$, we can check if $C$ is a seed of $B$ in $\Oh(\gamma_n)$ time.
\end{lemma}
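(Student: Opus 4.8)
The plan is to reduce the query to locating the string $C$ inside the suffix tree $\T(B)$ of the basic factor $B$ and then deciding whether its locus lies on one of the $\Oh(|B|)$ pairwise disjoint paths that represent the seeds of $B$ (by \cref{lem:seeds}\eqref{xa}, these are stored inside $\SeedSets(T)$; recall that only seeds of length at most half of the string are kept, which is harmless since $|C|\le|B|/2$). On top of $\SeedSets(T)$, the data structure will carry three inexpensive additions, each of total size $\Oh(n\log n)$ because the basic factors have total length $\Oh(n\log n)$: a weighted-ancestor structure (\cref{WAQ}) on every $\T(B)$; for every $B$, the array sending a position $q\in[1\dd|B|]$ to the leaf of $\T(B)$ labelled by the suffix $B[q\dd|B|]$; and the IPM structure (\cref{IPM}) on $T$. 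All of this is built in $\Oh(n\log n)$ time.

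The crucial ingredient is a structural observation that confines the search for an occurrence of $C$ in $B$ to a short window, so that a single IPM query suffices even though $B$ may be far longer than $2|C|$: \emph{if $C$ is a seed of $B$ and $2|C|\le|B|$, then $C$ occurs in $B$ at some position $q\le|C|$}, hence that occurrence lies inside the prefix $B[1\dd 2|C|-1]$. To prove it I would use that $C$ covers a string $W$ having $B$ as a factor; trimming $W$ to the first and last occurrences of $C$ that still keep $B$ inside (possible since the occurrences of a cover leave no gaps) brings $W$ to the form $X\cdot B\cdot Y$ with $|X|,|Y|<|C|$, and then the occurrence of $C$ covering position $|X|+|C|$ of $W$ starts in $[\,|X|+1,\,|X|+|C|\,]$ and ends by position $|X|+2|C|-1\le|X|+|B|-1$, so it falls entirely within the copy of $B$, at a local start position $q\le|C|$.

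The query then runs as follows. First issue one IPM query for the occurrences of $C$ inside $B[1\dd 2|C|-1]$ --- a legal query, as this window is a factor of $T$ of length $\le 2|C|$. If the result is empty, the observation guarantees $C$ is not a seed of $B$, so we answer \emph{no}. Otherwise take one occurrence $C=B[q\dd q+|C|-1]$ with $q\le|C|$, fetch the leaf of $\T(B)$ for $B[q\dd|B|]$ from the stored array, and run a weighted-ancestor query on $\T(B)$ from that leaf with string depth $|C|$; this returns, in $\Oh(\gamma_n)$ time, the (explicit or implicit) node of $\T(B)$ whose string label is exactly $C$. Finally we test whether this node lies on one of the stored seed paths of $B$; after precomputing, at the lower endpoint of every edge of $\T(B)$, the depths at which seed paths cross that edge, this test takes $\Oh(\gamma_n)$ time (a predecessor query in a universe of size $|B|$). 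We answer \emph{yes} iff the node lies on a seed path.

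Correctness follows from the observation and from the defining property of the stored representation (a string is a seed of $B$ precisely when its locus in $\T(B)$ lies on one of the recorded paths), and the time is $\Oh(1)+\Oh(\gamma_n)+\Oh(\gamma_n)=\Oh(\gamma_n)$. I expect the only genuine difficulty to be the structural observation of the second paragraph --- it is exactly what lets the occurrence search happen in a window of length below $2|C|$ so that \cref{IPM} applies; the remaining steps are a routine combination of \cref{WAQ,IPM,lem:seeds} with the precomputed structures.
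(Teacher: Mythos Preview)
Your proof is correct and follows essentially the same approach as the paper's own proof: an IPM query in the length-$\Theta(|C|)$ prefix window of $B$ to locate an occurrence of $C$, followed by the $\Oh(\gamma_n)$-time seed test of \cref{lem:seeds}\eqref{xb} applied to $B$ (via a WA query on $\T(B)$). The paper's write-up is terser---it simply notes that if $C$ has no occurrence in $T[a\dd a+2c-1]$ then it cannot be a seed because a full occurrence is needed to cover the middle position---whereas you spell out the trimming argument explicitly; but the substance is identical.
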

\begin{proof}
Let $|C|=c$ and $B=T[a \dd b]$. We first ask an IPM query to find an occurrence of $C$ inside $T[a\dd a+2c-1]$. If such an occurrence does not exist, then $C$ cannot be a seed of $T[a\dd b]$ as it is already not a seed of $T[a\dd a+2c-1]$ (there must be a full occurrence to cover the middle letter, and $a+2c-1\le b$). Otherwise, we can use the occurrence to check if $C$ is a seed of $B$ with \cref{lem:seeds}\eqref{xb}.
\end{proof}

For strings $C$ and $S$, by $\Cov(C,S)$ we denote the set of positions of $S$ that are covered by occurrences of $C$.

\begin{lemma}[Covering short factors]\label{short}
After $\Oh(n)$-time preprocessing, for any two factors $C$ and $F$ of $T$ such that $|F|/|C| = \Oh(1)$, the set $\Cov(C,F)$, represented as a union of maximal intervals, can be computed in $\Oh(1)$ time.
\end{lemma}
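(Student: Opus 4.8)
The plan is to reduce the problem to a small number of IPM queries together with an observation that, for factors whose lengths are within a constant factor, all occurrences of $C$ inside $F$ form $\Oh(1)$ arithmetic sequences. First I would fix notation: write $c = |C|$ and cover $F$ by $\Oh(1)$ overlapping windows $W_1, W_2, \ldots$ of length at most $2c$ each, where consecutive windows overlap by at least $c$, so that every occurrence of $C$ in $F$ lies entirely within some $W_t$; since $|F|/|C| = \Oh(1)$ the number of windows is $\Oh(1)$. For each window $W_t$ I would issue one IPM query (Lemma \ref{IPM}) to obtain the occurrences of $C$ inside $W_t$ as a single arithmetic sequence, say with first occurrence at position $p_t$, common difference $d_t$, and $k_t$ terms. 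Taking the union over all windows gives the set $\Occ$ of all starting positions of $C$ in $F$ as $\Oh(1)$ arithmetic sequences, computed in $\Oh(1)$ total time.

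Next I would convert each arithmetic sequence of occurrences into the region of $F$ it covers. For a single occurrence at position $p$, the covered interval is $[p \dd p+c-1]$. For an arithmetic sequence of occurrences $p, p+d, \ldots, p+(k-1)d$ with $d \le c$ (which is automatic when the sequence has at least two terms, since two occurrences of $C$ that are both inside a window of length $2c$ must overlap or abut), the union of the intervals $[p+id \dd p+id+c-1]$ is the single interval $[p \dd p+(k-1)d+c-1]$, because consecutive intervals overlap. When $d > c$ the sequence can have at most one term anyway. So each of the $\Oh(1)$ arithmetic sequences of occurrences contributes a single interval, and $\Cov(C,F)$ is the union of these $\Oh(1)$ intervals. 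Finally I would merge overlapping or adjacent intervals among these $\Oh(1)$ intervals in $\Oh(1)$ time to obtain the representation as a union of maximal intervals.

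The only preprocessing needed is the IPM data structure of Lemma \ref{IPM}, which is built in $\Oh(n)$ time, matching the claimed bound. The main point to be careful about — the ``hard part'', though it is more a matter of bookkeeping than of genuine difficulty — is the windowing argument: one must choose the windows so that (i) every occurrence of $C$ in $F$ is fully contained in some window, which requires consecutive windows to overlap by at least $c-1$ positions, and (ii) each window has length at most $2c$ so that Lemma \ref{IPM} applies with $|Y| \le 2|X|$. Both are easily arranged: take windows starting at positions $1, 1 + \lceil c/2 \rceil, 1 + 2\lceil c/2 \rceil, \ldots$ within $F$, each of length $\min(2c, |F|)$ truncated at the right end of $F$; since $|F| = \Oh(c)$ there are $\Oh(1)$ of them, and any length-$c$ window of $F$ is contained in one of them. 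Degenerate cases ($c > |F|$, so $\Cov(C,F) = \emptyset$; or $c = |F|$, handled by a single LCE or IPM check) are disposed of separately in $\Oh(1)$ time.
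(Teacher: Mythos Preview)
Your proposal is correct and takes essentially the same approach as the paper: cover $F$ by $\Oh(1)$ windows of length $2c$, issue one IPM query per window, turn each returned arithmetic sequence of occurrences into a single covered interval, and merge. The paper's proof is terser (it uses step $|C|$ rather than your $\lceil c/2\rceil$, and leaves implicit the observation that an arithmetic sequence of occurrences with difference $d\le c$ covers a contiguous interval), but the argument is the same.
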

\begin{proof}
We ask IPM queries for pattern $C$ on length-$2|C|$ factors of $F$ with step $|C|$. Each IPM query returns an arithmetic sequence of occurrences that corresponds to an interval of covered positions (possibly empty). It suffices to compute the union of these intervals.
\end{proof}

\begin{lemma}[Seeds of strings concatenation]\label{concatenation}
After $\Oh(n)$-time preprocessing, for any three factors $C$, $F_1=T[i\dd j]$ and $F_2=T[j+1\dd k]$ of $T$ such that $2|C| \le |F_1|,|F_2|$ and $C$ is a seed of both $F_1$ and $F_2$, we can check if $C$ is also a seed of $F_1F_2$ in constant time.
\end{lemma}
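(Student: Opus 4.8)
The plan is to reduce the question to the coverage of a short window around the junction position $j$. Write $c=|C|$, $W=F_1F_2=T[i\dd k]$, $n_1=|F_1|$ and $n_2=|F_2|$, so $n_1,n_2\ge 2c$. Recall that $C$ is a seed of a string $V$ exactly when $C$ occurs in $V$ and every position of $V$ is covered either by an occurrence of $C$ lying inside $V$ or by one of the two maximal ``overhanging'' occurrences at the ends of $V$; the left overhang covers the longest prefix of $V$ that is a suffix of $C$, and the right overhang covers the longest suffix of $V$ that is a prefix of $C$. Because $n_1,n_2\ge 2c$, these extremal prefix and suffix of $W$ lie entirely inside $F_1$ and inside $F_2$, so the two overhangs of $W$ coincide with the left overhang of $F_1$ and the right overhang of $F_2$.

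First I would locate two occurrences of $C$. Using a single IPM query on $T[j-2c+1\dd j]$ I find the rightmost occurrence of $C$ lying inside $F_1$ and set $e_1$ to be its right endpoint; symmetrically, using a single IPM query on $T[j+1\dd j+2c]$ I find the leftmost occurrence of $C$ inside $F_2$ and set $e_2$ to be its left endpoint. Both occurrences exist since $C$ is a seed (hence a factor) of $F_1$ and of $F_2$, and they really fall inside these windows: as $C$ is a seed of $F_1$ and the prefix/suffix overlaps have length at most $c$, the rightmost occurrence of $C$ inside $F_1$ must end within distance $c$ of position $j$, and symmetrically on the $F_2$ side. Each query costs $\Oh(1)$.

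The core claim is that $C$ is a seed of $W$ if and only if the (possibly empty) ``middle gap'' $M=[e_1+1\dd e_2-1]$ is covered by occurrences of $C$ lying inside $W$. For one direction I would show that $T[i\dd e_1]$ is already covered by occurrences of $C$ inside $F_1$ together with the left overhang of $W$: using that $C$ is a seed of $F_1$ and that both overlaps have length at most $c$, consecutive occurrences of $C$ inside $F_1$ turn out to be at distance at most $c$, hence they cover the interval from the first to the last of them with no gap, while the prefix of $F_1$ to the left of the first occurrence is covered by the left overhang. Symmetrically $T[e_2\dd k]$ is covered from the $F_2$ side. Since an occurrence of $C$ inside $F_1$ cannot reach beyond $e_1$ and an occurrence inside $F_2$ cannot reach before $e_2$, the only occurrences of $C$ in $W$ that can help cover $M$ are those straddling position $j$; this yields the claim.

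Finally I would localize the computation. The overlap bounds give $M\subseteq T[j-c+2\dd j+c-1]$, and every occurrence of $C$ straddling $j$ is contained in $J:=T[j-c+1\dd j+c]$, a factor of length $2c$, while conversely every occurrence of $C$ inside $J$ is an occurrence inside $W$. Therefore $M$ is covered by occurrences of $C$ inside $W$ if and only if $M\subseteq\Cov(C,J)$. I compute $\Cov(C,J)$ by \cref{short} applied to the pair $C,J$ (whose length ratio is $2$) as a union of $\Oh(1)$ maximal intervals in $\Oh(1)$ time, and then check in $\Oh(1)$ time whether the single interval $M$ is contained in this union; all preprocessing is $\Oh(n)$ (for IPM by \cref{IPM}, and for \cref{short}). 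The main obstacle is the bookkeeping inside the core claim: proving that $T[i\dd e_1]$ and $T[e_2\dd k]$ are covered ``from the outside'' and that nothing except the straddling occurrences — which all fit in the length-$2c$ window $J$ — is relevant to $M$; this is exactly where the hypotheses $2|C|\le|F_1|,|F_2|$ and ``$C$ is a seed of both $F_1$ and $F_2$'' are used in full.
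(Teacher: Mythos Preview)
Your proof is correct and rests on the same idea as the paper's: the seed test for $F_1F_2$ reduces to a coverage check in a short window around the junction position $j$, which is then carried out via \cref{short}. The paper's version is considerably more direct: it simply checks whether the fixed window $U=T[j-c+1\dd j+c]$ is entirely covered by occurrences of $C$ inside $F=T[j-2c+1\dd j+2c]$; one call to \cref{short} on the pair $(C,F)$ with $|F|/|C|=4$ suffices, and the argument that this is equivalent to $C$ being a seed of $F_1F_2$ is immediate (any occurrence of $C$ touching $U$ lies inside $F\subseteq W$, and the two overhangs of $W$ cannot reach $U$ since $|F_1|,|F_2|\ge 2c$).

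What you do differently is to first locate the boundary full occurrences $e_1,e_2$ via two additional IPM queries, so that the interval to be checked shrinks from $U$ to the data-dependent gap $M=[e_1+1\dd e_2-1]$, and the search window shrinks from $F$ (length $4c$) to $J$ (length $2c$). The price is a longer correctness argument: you need the lemma that consecutive full occurrences of $C$ inside $F_1$ are at distance at most $c$ (this is true, but it is exactly the ``bookkeeping'' you flag as the main obstacle, and it needs a short case analysis ruling out that the right overhang of $F_1$ could bridge a larger gap between two full occurrences). Both routes give $\Oh(1)$ query time after $\Oh(n)$ preprocessing; the paper's is just shorter.
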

\begin{proof}
For a string $C$ of length $c$ being a seed of both $T[i\dd j]$ and $T[j+1\dd k]$ to be a seed of $T[i\dd k]$, it is enough if its occurrences cover the string $U=T[j-c+1\dd j+c]$. We can check this condition if we apply \cref{short} for $C$ and $F=T[j-2c+1\dd j+2c]$.
%
\end{proof}

\begin{lemma}\label{testing}
After $\Oh(n\log n)$ time and space preprocessing of $T$,
a query $\IsCover(\ell,S)$ 
can be answered in $\Oh(\log(|S|/\ell) \,\gamma_n+1)$ time.
\end{lemma}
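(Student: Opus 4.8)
The plan is to reduce $\IsCover(\ell,S)$ to a few seed tests on basic factors, handled by \cref{base}, together with a constant number of IPM queries. Put $C := S[1\dd\ell]$ and $m := |S|$; the answer is trivially negative if $\ell>m$ and trivially positive if $\ell=m$, so assume $\ell<m$. Since any cover is a border and a border of a fixed length is unique, the only candidate cover of length $\ell$ is $C$ itself, and $C$ is a cover of $S$ iff the occurrences of $C$ lying entirely inside $S$ cover all of $[1\dd m]$ -- and here position $1$ is automatically covered while position $m$ is covered exactly when $C$ is a suffix of $S$, so the ``border'' requirement needs no separate handling. If $m<8\ell$ then $|S|/|C|=\Oh(1)$, so \cref{short} computes $\Cov(C,S)$ as a union of intervals in $\Oh(1)$ time and we answer positively iff this union equals $[1\dd m]$; this disposes of all cases with $\log(m/\ell)=\Oh(1)$ and accounts for the ``$+1$'' in the bound, so from now on assume $m\ge 8\ell$.

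Let $\lambda$ be the smallest power of two with $\lambda\ge 2\ell$, so $2\ell\le\lambda<4\ell$ and $m\ge 2\lambda$. Writing $S=T[x\dd y]$, cut off the shortest prefix $R_L$ and suffix $R_R$ of $S$ so that the remaining core $G=T[i'\dd j']$ begins and ends at level-$\log\lambda$ boundaries of the static range tree; then $|R_L|,|R_R|<\lambda$, and the range-tree decomposition expresses $G$ as a concatenation of $t=\Oh(\log(m/\lambda))=\Oh(\log(m/\ell))$ basic factors $F_1,\dots,F_t$, each of length at least $\lambda\ge 2|C|$. The first thing to prove is that $C$ is a seed of $G$ iff $C$ is a seed of every $F_i$ and, for each internal boundary at text position $j$ of this decomposition, the interval $[j-\ell+1\dd j+\ell]$ is covered by occurrences of $C$. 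The ``if'' direction is $t-1$ successive applications of \cref{concatenation}; the ``only if'' direction rests on the observation that whenever a string of length $\ge 2|C|$ is covered, as a superstring, by $C$, then $C$ occurs entirely inside it and every position at distance $\ge|C|$ from both of its ends is covered by an occurrence of $C$ lying entirely inside it (hence $C$ is a seed of each $F_i$, and the interior junctions are covered by occurrences internal to $G$). Each $F_i$-seed test costs $\Oh(\gamma_n)$ via \cref{base} (its hypothesis $2|C|\le|F_i|$ holds), each junction test costs $\Oh(1)$ via \cref{short}, so this characterization is checked in $\Oh(t\,\gamma_n)=\Oh(\log(|S|/\ell)\,\gamma_n)$ time; if some $F_i$-test already fails we reject, which is sound because any cover of $S$ is in particular a seed of $G$.

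It remains to pass from ``$C$ is a seed of $G$'' to ``$C$ covers $S$''. The same interior-occurrence observation shows that, once $C$ is a seed of $G$, every position of $S$ lying in $G$ at distance $\ge|C|$ from both ends of $G$ is already covered by an occurrence of $C$ inside $S$. Therefore the only positions of $S$ whose coverage is still in question form a prefix and a suffix of $S$, each of length less than $\lambda+\ell<5\ell$, and -- crucially -- any occurrence of $C$ inside $S$ touching the undecided prefix lies inside $S[1\dd 6\ell]$ (symmetrically for the suffix, and $6\ell\le m$ in this regime). Hence \cref{short} computes the two relevant $\Cov$ sets in $\Oh(1)$ time and we check that each contains the corresponding boundary interval; the query answers positively iff the seed-of-$G$ test and both boundary checks succeed.

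Adding the $\Oh(\log(m/\ell))$ time needed to build the decomposition itself, the total query time is $\Oh(\log(|S|/\ell)\,\gamma_n+1)$, and the preprocessing is exactly what \cref{base} requires -- the $\Oh(n\log n)$-time, $\Oh(n\log n)$-space tree $\SeedSets(T)$ -- together with the $\Oh(n)$-time structures of \cref{RMQ,LCE,IPM}. The step I expect to be most delicate is the ``only if'' direction of the seed-of-$G$ characterization and the analogous reduction above: one must verify carefully that the occurrences witnessing coverage of interior and junction positions cannot protrude from the relevant basic factor, from $G$, or from $S$ -- which is precisely where the slack $|F_i|\ge 2|C|$, $\lambda\ge 2\ell$ and the window length $6\ell$ are spent -- and that feeding these bounded windows to \cref{short} loses no relevant occurrence.
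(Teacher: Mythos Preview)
Your proof is correct and follows essentially the same route as the paper's own argument: decompose $S$ into $\Oh(\log(|S|/\ell))$ basic factors of length at least $2\ell$ plus a short prefix and suffix, test the seed property on each basic factor via \cref{base}, glue the pieces via \cref{concatenation}, and finish the two boundary regions with \cref{short}. You are more explicit than the paper about the ``only if'' direction of the seed-of-$G$ characterization and about the exact window sizes at the boundaries, but the structure and complexity analysis are the same.
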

\begin{proof}
Let $S=T[i \dd j]$, $|S|=s$ and $C=T[i\dd i+\ell-1]$.

We consider a decomposition of $S$ into basic factors, but we are only interested in basic factors of length at least $2\ell$ in the decomposition. Let $F_1,\ldots,F_k$ be those factors and $T[i \dd i'],T[j' \dd j]$ be the remaining prefix and suffix of length $\Oh(\ell)$. Note that $k=\Oh(\log(s/\ell))$. Moreover, this decomposition can be computed in $\Oh(k+1)$ time by starting from the leftmost and rightmost basic factors of length $2^b$, where $b=\ceil{\log \ell}+1$, that are contained in $S$.

If $C$ is a cover of $S$, it must be a seed of each of the basic factors $F_1,\ldots,F_k$.
We can check this condition by using \cref{base} in $\Oh(k\gamma_n)$ total time.

Next we check if $C$ is a seed of $F_1 \cdots F_k$ in $\Oh(k)$ total time using \cref{concatenation}.
Finally, we use IPM queries to check if occurrences of $C$ cover all positions in each of the strings $T[i \dd i'+c-1]$, $T[j'-c+1 \dd j]$ and if $C$ is a suffix of $T[i \dd j]$, using \cref{short}. This takes $\Oh(1)$ time.

The total time complexity is $\Oh(k\gamma_n+1)$.
\end{proof}

As we will see in the next section, $\IsCover$ queries immediately imply a slower, $\Oh(\log^2 n \,\gamma_n)$-time algorithm for answering $\shcov$ queries. However, they are also used in our algorithm for answering $\allcov$ queries. In the efficient algorithm for $\shcov$ queries we use the following generalization of $\IsCover$ queries.

\defproblemq{Longest Covered Prefix Query (\LCovP($\ell,S$))}
{A factor $S$ of $T$ and a positive integer $\ell$}{
The longest prefix $P$ of $S$ that is covered by $S[1 \dd \ell]$
}

To answer these queries, we introduce an intermediate problem that is more directly related to the range tree containing seeds representations.

\defproblem{\SeededPref($C,\ell,S)$ query} 
{A length-$\ell$ factor $C$ of $T$ and a factor $S$ being a concatenation of basic factors of $T$ of length $2^p$, where $p=\min\{q\in\mathbb{Z}\,:\,2^q \ge 2\ell\}$}
{The length $m$ of the longest prefix of $S$ which is a concatenation of basic factors of length $2^p$ such that $C$ is a seed of this prefix}

In other words, we consider only blocks of $S$ which are basic factors of length $2^p=\Theta(\ell)$.
Everything starts and ends in the beginning/end of a basic factor of length $2^p$.
The number of such blocks in the prefix returned by $\SeededPref$ is $\Oh(\result'/\ell)$, where $\result'=\SeededPref(C,\ell,S)$, and, as we show in \cref{LSeededPref}, it can be computed in $\Oh(\log(\result'/\ell)\gamma_n+1)$ time. This is how we achieve $\Oh(\log (\result/\ell)\,\gamma_n+1)$ time for $\LCovP(\ell,S)$ queries.
In a certain sense the computations behind \cref{LSeededPref} can work in a pruned range tree $\SeedSets(T)$.
\begin{lemma}\label{LRed} 
%
%
After $\Oh(n)$-time preprocessing, a $\LCovP(\ell,S)$ query reduces in $\Oh(1)$ time to a \linebreak $\SeededPref(C,\ell,S')$ query with $|S'|\le |S|$.
\end{lemma}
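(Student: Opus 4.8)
The plan is to reuse the block‑decomposition of \cref{testing}, but now to track the \emph{exact} right endpoint of the covered prefix, delegating the ``interior'' part of the computation to a single $\SeededPref$ call. Write $S=T[i\dd j]$, $s=|S|$, $c=\ell$, $C=T[i\dd i+\ell-1]$, and $p=\min\{q\in\mathbb Z:2^q\ge 2\ell\}$, so $2\ell\le 2^p<4\ell$. Since $C$ covers itself, the output prefix equals $T[i\dd x^\ast]$, where $x^\ast\ge i+\ell-1$ is the largest index such that $C$ covers $T[i\dd x^\ast]$ (by occurrences of $C$ lying inside $[i,x^\ast]$). First I dispose of the degenerate cases in $\Oh(1)$ time and without any $\SeededPref$ call: if $s$ is below a threshold that is $\Oh(\ell)$ (chosen so that otherwise $S$ certainly contains a length-$2^p$ basic factor), then $|S|/|C|=\Oh(1)$, so \cref{short} yields $\Cov(C,S)$ as maximal intervals and $x^\ast$ is the right end of the interval starting at position $1$; and in general, letting $a$ be the start of the first length-$2^p$ basic factor inside $S$ (so $i\le a\le i+2^p-1$, found by a range-tree walk), I apply \cref{short} to the window $T[i\dd\min(a+2c,j)]$ of length $\Oh(\ell)$ and, if the covered prefix already stops strictly before position $a+c-1$, I read $x^\ast$ off directly.

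It remains to treat the case where $C$ covers all of $T[i\dd a+c-1]$. Here I set $S'=T[a\dd j']$ with $j'=a+r2^p-1$ and $r=\lfloor(j-a+1)/2^p\rfloor$: then $S'$ is a concatenation of basic factors of length $2^p=\Theta(\ell)$, $|S'|\le|S|$, and $C$ is a length-$\ell$ factor of $T$, so $\SeededPref(C,\ell,S')$ is a legitimate query; let $m$ be its answer, and let $F_1,\dots,F_r$ be the length-$2^p$ blocks of $S'$, so $m=2^pt^\ast$ where $t^\ast$ is the largest $t\le r$ for which $C$ is a seed of $F_1\cdots F_t$. The heart of the reduction is the two-sided estimate
\[
\max\{a+c-1,\;a+m-c\}\;\le\;x^\ast\;\le\;a+m+2^p-2 .
\]
For the upper bound: if $C$ covered $T[i\dd x]$ with $x\ge a+t2^p-1$ for some $t\ge1$, then some occurrence of $C$ lies inside the block-aligned factor $F_1\cdots F_t$, while the occurrences of $C$ that meet $F_1\cdots F_t$ span a contiguous window containing it and cover that window from within; hence $C$ is a seed of $F_1\cdots F_t$, so $t\le t^\ast$ and thus $x\le a+m+2^p-2$. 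For the lower bound: $x^\ast\ge a+c-1$ holds by hypothesis, and if $m>0$ then, $C$ being a seed of $F_1\cdots F_{t^\ast}$, its occurrences together with at most one overhanging partial occurrence at each end cover $T[a\dd a+m-1]$; the left overhang is a proper prefix of $F_1$ of length $<c$ and is therefore already covered by the genuine occurrences of $C$ that (by hypothesis) cover $T[i\dd a+c-1]$, so discarding it and the right overhang (also of length $<c$) shows that genuine occurrences of $C$ cover $T[i\dd a+m-1-c']$ for some $0\le c'<c$, i.e.\ $x^\ast\ge a+m-c$.

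Given the estimate, $x^\ast$ lies inside the window $W=T[\max\{a+c-1,a+m-c\}\dd\min\{a+m+2^p-2,\,j\}]$, which has length $\Oh(\ell)$ and whose left endpoint has just been certified to be covered by $C$. One last application of \cref{short} computes $\Cov(C,W)$ and returns $x^\ast$ as the right end of the maximal covered interval of $\Cov(C,W)$ containing the left endpoint of $W$. Every step apart from the $\SeededPref$ call costs $\Oh(1)$ time after the $\Oh(n)$-time preprocessing of \cref{short} (IPM queries) and of the static range tree, which is what the lemma asserts. I expect the one genuinely delicate point to be the two-sided estimate, i.e.\ the ``dictionary'' between ``$C$ is a seed of a block range'' and ``$C$ covers a prefix'': a seed tolerates overhanging partial occurrences at both ends of the range, a prefix-cover does not, and the crux is precisely that the independently verified internal covering of the transition region $T[i\dd a+c-1]$ absorbs the left overhang while the right overhang, being shorter than $\ell$, contributes only the additive $\Oh(\ell)$ slack that the final \cref{short} call removes; the remaining work (the degenerate branches, the case $m=0$) stays inside $\Oh(\ell)$-sized windows and needs no $\SeededPref$ call.
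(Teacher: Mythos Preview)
Your reduction is essentially the paper's: dispose of the short case with \cref{short}, take $S'$ to be the maximal block-aligned factor of $S$, call $\SeededPref(C,\ell,S')$, and use the answer to localize $x^\ast$ in an $\Oh(\ell)$-window that one more call to \cref{short} resolves; your two-sided estimate is just an explicit form of the paper's one-line assertion $|(i'+\SeededPref)-(i+\LCovP)|<2^p$.

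Two small fixes are needed. First, the start $a$ of the first length-$2^p$ basic factor inside $S$ should be obtained by modular arithmetic (as the paper does), not by a ``range-tree walk'', or you lose the $\Oh(1)$ bound. Second, your final window $W$ must be widened by $c-1$ positions to the left: you only know that the left endpoint $L$ of $W$ is covered by some occurrence of $C$ starting in $[L-c+1,L]$, and if that occurrence starts strictly before $L$ it is invisible to $\Cov(C,W)$, so the ``maximal covered interval containing $L$'' in $\Cov(C,W)$ need not reach $x^\ast$ (indeed $L$ itself may be uncovered there). Extending $W$ to start at $L-c+1$ keeps $|W|=\Oh(\ell)$ and restores the argument.
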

\begin{proof}
First, let us check if the answer to $\LCovP(\ell,S)$ is small, i.e.\ at most $4\ell$, using \cref{short}. Otherwise, let $p$ be defined as in a $\SeededPref$ query, $C=S[1 \dd \ell]$ and $S'$ be the maximal factor of $S$ that is composed of basic factors of length $2^p$ ($S'$ can be the empty string, if $|S| < 3\cdot 2^p$). Let $S=T[i \dd j]$ and $S'=T[i' \dd j']$. Then
$$|(i'+\SeededPref(C,\ell,S'))-(i+\LCovP(\ell,S))|< 2^p;$$
see \cref{Reduction}.
Hence, knowing $d=\SeededPref(C,\ell,S')$, we check in $\Oh(1)$ time, using \cref{short} in a factor $T[i'+d-2^p \dd i'+d+2^p-1]$ of length $2^{p+1}$, what is the exact value of $\LCovP(\ell,S)$.

\begin{figure}
    \centering
\begin{tikzpicture}
    \draw (0,0) -- (10,0);
    \foreach \x in {0,2,...,10}{\draw (\x,-0.1) -- (\x,0.1);}
    \draw[brown!70!black, ultra thick] (2,0) -- (6,0);
    \draw[thick,darkteal,snake=brace] (8,-0.7) -- node[below] {$S'$} (2,-0.7);
    \draw[thick,blue] (3,0) node[below] {$F_1$};
    \draw[thick,blue] (5,0) node[below] {$F_2$};
    \draw[thick,blue] (7,0) node[below] {$F_3$};
    \draw (1,0.5) -- (9.5,0.5);
    \draw[darkblue] (8.5,0.5) node[above] {$S$};
    \draw[blue,ultra thick] (1,0.5) -- (7.6,0.5);
    \foreach \x/\y in {1/1,1.8/1.3,2.4/1,3.4/1.3,3.8/1,4.7/1.3,5.4/1,6.2/1.3,6.6/1}{
        \draw[thick,violet,xshift=\x cm,yshift=\y cm] (0,0) -- (1,0);
    }
    \draw[darkblue,thick] (1.5,1) node[above] {$C$};
    \foreach \x in {2,4,6,8}{\draw[densely dotted] (\x,0) -- (\x,2);}
\end{tikzpicture}
    \caption{$F_1$, $F_2$, $F_3$ are basic factors of length $2^p$. The answers to $\LCovP(\ell,S)$ and $\SeededPref(C,\ell,S')$ queries are shown in bold. Note that $C$ is a seed of $F_1$ and $F_2$ and that it could be the case that $C$ is also a seed of $F_3$, even though it has no further full occurrence.}
    \label{Reduction}
\end{figure}
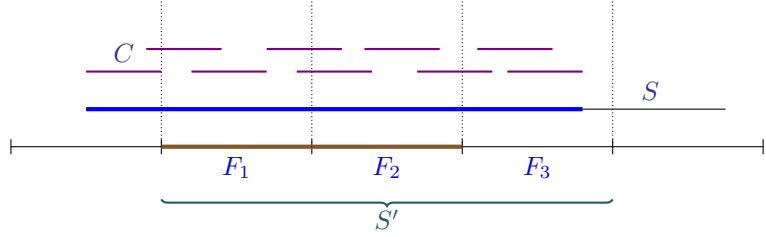

We compute $p$ using the formula $p=1+\ceil{\log \ell}$. Then the endpoints of $S'$ can be computed from the endpoints of $S$ in $\Oh(1)$ time using simple modular arithmetic. The $\Oh(n)$ preprocessing is due to \cref{short}.
\end{proof}

\newcommand{\SeededBasic}{\mathit{SeededBasic}}
\newcommand{\TestConcat}{\mathit{TestConcat}}
\newcommand{\TestExtend}{\mathit{TestExtend}}

To answer $\SeededPref$ queries we use our range tree which stores seeds of every basic factor.
Recall that for each basic factor $T[i \dd j]$ we can check if $C$ is a seed of this factor in $\Oh(\gamma_n)$ time (\cref{base}); we denote this test $\SeededBasic(C,i,j)$. 

Also for any two neighboring factors $T[i \dd j]$, $T[j+1 \dd k]$, for which $C$ is a seed, we can check in $\Oh(1)$ time if $C$ is a seed of the composite factor $T[i \dd k]$ (\cref{concatenation}); we denote this test $\TestConcat(C,i,j,k)$.

\newcommand{\rank}{\mathit{rank}}
\begin{lemma}\label{LSeededPref}
After $\Oh(n \log n)$ time and space preprocessing of $T$, a query $\SeededPref(C,\ell,S)$ 
can be answered in $\Oh(\log(\result/\ell) \,\gamma_n+1)$ time, where $\result=|\SeededPref(C,\ell,S)|$.
\end{lemma}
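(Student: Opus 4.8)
The plan is to reduce the query to finding, for the decomposition $S=F_{1}F_{2}\cdots F_{N}$ of $S$ into its length-$2^{p}$ basic-factor blocks, the largest $t^{*}$ such that $C$ is a seed of $F_{1}\cdots F_{t^{*}}$; the answer is then $\result=2^{p}t^{*}$. Two facts drive the argument. First, a monotonicity: since $2^{p}\ge 2|C|$, any occurrence of $C$ covering the middle position of a block lies inside that block, so $C$ occurs in every block; restricting a covered superstring then shows that if $C$ is a seed of $F_{1}\cdots F_{t}$ it is also a seed of each $F_{i}$ ($i\le t$) and of every prefix $F_{1}\cdots F_{t'}$ with $t'\le t$ -- hence the valid values of $t$ form an initial segment $\{0,1,\dots,t^{*}\}$. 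Second, for strings $P$ and $X$ of length $\ge 2|C|$ of which $C$ is a seed, \cref{concatenation} decides ``$C$ is a seed of $PX$'' by testing whether occurrences of $C$ cover the fixed length-$2|C|$ window around the $P$--$X$ junction; as that window does not depend on $|X|$, $C$ is a seed of $PX$ iff $C$ is a seed of $P$ followed by the single first block of $X$. I will use two corollaries: (i) if $C$ is a seed of $P=F_{1}\cdots F_{t}$ and of a block $D=F_{t+1}\cdots F_{t+r}$ but not of $PD$, then $C$ is not a seed of $F_{1}\cdots F_{t+1}$, so $t^{*}=t$; and (ii) if $C$ is not a seed of a block $Y$, it is not a seed of any string containing $Y$ as a factor of length $\ge 2|C|$, so appending $Y$ (or any part of it) to a confirmed prefix of $t$ blocks forces $t^{*}<t+(\text{number of leaves of }Y)$.

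The algorithm navigates the pruned range tree $\SeedSets(T)$ formed by the subtrees of string depth $\ge 2^{p}$ contained in $S$, whose leaves are exactly $F_{1},\dots,F_{N}$. If $\SeededBasic(C,F_{1})$ (\cref{base}) fails, return $0$; otherwise maintain a confirmed prefix $P=F_{1}\cdots F_{t}$ with $C$ a seed of it, starting from $t=1$. \emph{Growth phase:} while $t<N$, write $c$ for the leaf index of $F_{1}$, set $a=\min\{\nu_{2}(c+t),\ \lfloor\log_{2}t\rfloor,\ \lfloor\log_{2}(N-t)\rfloor\}$, and let $D$ be the subtree of the $2^{a}$ leaves following $P$; since $a\le\nu_{2}(c+t)$, $D$ is a genuine basic factor of $T$ (so $\SeededBasic$ applies, as $|D|\ge 2^{p}\ge 2|C|$ and its seed set is precomputed in $\SeedSets(T)$), while $a\le\lfloor\log_{2}t\rfloor$ keeps each extension within a doubling of $t$. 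Call $\SeededBasic(C,D)$, and on success $\TestConcat(C,P,D)$ (\cref{concatenation} with the endpoints of $P$ and $D$): if both succeed, set $P\leftarrow PD$, $t\leftarrow t+2^{a}$ and repeat; if $C$ is a seed of $D$ but not of $PD$, return $2^{p}t$ by corollary (i); if $C$ is not a seed of $D$, then $t^{*}\in[t,t+2^{a})$ and we go to the descent. \emph{Descent phase:} walk down the subtree $D$ keeping the confirmed $P$ -- at a node $X$ (with $C$ not a seed of $X$) call $\SeededBasic(C,X_{L})$ on the left child: if $C$ is a seed of $X_{L}$ and $\TestConcat(C,P,X_{L})$ succeeds, append $X_{L}$ to $P$ and treat the right child $X_{R}$ the same way (if $C$ is a seed of $X_{R}$ then $\TestConcat(C,PX_{L},X_{R})$ necessarily fails, because $C$ is not a seed of $X=X_{L}X_{R}$, so corollary (i) ends the search; otherwise recurse into $X_{R}$); if $C$ is a seed of $X_{L}$ but $\TestConcat(C,P,X_{L})$ fails, corollary (i) ends the search; if $C$ is not a seed of $X_{L}$, recurse into $X_{L}$. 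A recursion that reaches a leaf it cannot append determines $t^{*}=t$. Throughout, the one extra leaf following $P$ and the length-$2|C|$ window around its junction are handled in $\Oh(1)$ by \cref{short} or a single $\TestConcat$. Correctness of the output $2^{p}t^{*}$ is then immediate from the two corollaries and the monotonicity.

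For the running time, each growth or descent step performs one $\SeededBasic$ call and $\Oh(1)$ $\TestConcat$/\cref{short} calls, i.e.\ $\Oh(\gamma_{n})$ time, plus $\Oh(1)$ arithmetic to compute $a$ and locate $D$; the descent walks down at most $a\le\lfloor\log_{2}t\rfloor\le\log_{2}t^{*}$ levels, so it costs $\Oh(\log t^{*}\cdot\gamma_{n})$. The delicate part -- which I expect to be the main obstacle -- is bounding the number of growth steps by $\Oh(\log t^{*})$ rather than the naive $\Oh(\log^{2}t^{*})$: the steps with $a=\lfloor\log_{2}t\rfloor$ increase $t$ by more than $t/2$, so there are only $\Oh(\log t^{*})$ of them, while every other step has $a=\nu_{2}(c+t)<\lfloor\log_{2}t\rfloor$ and therefore strictly increases $\nu_{2}(c+t)$; since the first step is a doubling step and, right after any doubling step, $\nu_{2}(c+t)$ lies within $\Oh(1)$ of $\lfloor\log_{2}t\rfloor$, each maximal run of the non-doubling steps has constant length, so there are $\Oh(\log t^{*})$ of them as well (the $\lfloor\log_{2}(N-t)\rfloor$ cap can only shorten the last $\Oh(\log(N-t))$ steps, which is within the bound precisely when $t^{*}$ is close to $N$). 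Since $2\ell\le 2^{p}<4\ell$, we have $\log t^{*}=\Theta(\log(\result/\ell))$ whenever $t^{*}\ge 1$, so the total running time is $\Oh(\log(\result/\ell)\,\gamma_{n}+1)$, the additive term covering the $t^{*}=0$ case, and the only preprocessing used is the $\Oh(n\log n)$-time and -space construction of $\SeedSets(T)$ together with the linear-time structures of \cref{base}, \cref{short} and \cref{concatenation}.
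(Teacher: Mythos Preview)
Your approach is essentially the same as the paper's: both perform an exponential-growth phase followed by a binary-search (descent) phase on the range tree over the length-$2^{p}$ blocks of $S$, using \cref{base} and \cref{concatenation} as the only $\gamma_n$-time primitives, and both establish the monotonicity and junction-window facts you call corollaries (i) and (ii). The paper's version differs only in bookkeeping---it maintains a single non-decreasing exponent $k$ that is incremented at least every other step in the doubling phase and then decremented once per step in the binary-search phase---so it sidesteps your three-way case analysis on $\nu_{2}(c+t)$, $\lfloor\log_{2}t\rfloor$, and the $\lfloor\log_{2}(N-t)\rfloor$ cap, but the two arguments are otherwise equivalent.
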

\begin{proof}
Let us define 
$$\rank(i)\,=\, \max\,\{\, k\;:\;  [i\dd i+2^k)\ \mbox{is a basic interval}\;\}.$$
All $\rank$ values for $i=1,\ldots,n$ can be computed in $\Oh(n)$ time from the basic intervals.
\begin{observation}
If $\rank(i)\ge k$, then $\rank(i+2^k)\ge k$.
\end{observation}

We introduce a Boolean function that is applied only if $\rank(j+1) \ge k$:
 $$\TestExtend(i,j,k) \Leftrightarrow \SeededBasic(C,j+1,j+2^k) \land \TestConcat(C,i,j,j+2^k).$$

We can then use the following \cref{seededprefalg} to compute the result of a query. The algorithm implicitly traverses the static range tree. For an illustration, see \cref{bigdrzewo}, where the
\emph{Doubling Phase} corresponds to ascending the tree, 
and the \emph{Binary Search Phase} corresponds to descending the tree. Intuitively, if $\rank(i) \ge k$, then the basic interval $[i \dd i+2^k)$ is the left child of its parent in the tree if and only if $\rank(i)>k$.

Variable $k$ is incremented in every second step of the Doubling Phase. At the conclusion of the phase, we know that $C$ is a seed of $T[start \dd last]$, where $last-start+1 \ge 2^k$, and the final output will be $T[start \dd last']$, where $(last'-last) \in [0 \dd 2^k)$. Intuitively, we already have an approximation and use the Binary Search Phase to compute the actual result.

\SetKwBlock{Repeat}{repeat}{}
\begin{algorithm}
\tcp{Doubling Phase:}
Let $S=T[start\dd end]$ and $p=1+\ceil{\log \ell}$\;
$last:=start+2^p-1;$ $k:=p$\;
\Repeat{
      \tcp{Invariant: $C$ is a seed of $T[start \dd last]$, $last-start+1 \ge 2^k$ and $\rank(last+1) \ge k$.}
      \If{$last \ge end$}{
        \Return{$S$}\;
      }
      \lIf{\KwSty{not} $\TestExtend(start,last,k)$}{\KwSty{break}}
      $last:=last+2^k$\;
      \lIf{$\rank(last+1)>k$}{$k:=k+1$}
}
~\\
\tcp{Binary Search Phase:}
\Repeat{
      $k:=k-1$\;
      \lIf{$k<p$}{\KwSty{break}}
      \If{$last \ge end$}{\Return{$S$}\;}
      \If{$\TestExtend(start,last,k)$}{$last:=last+2^k$\;}
}
\Return{$T[start\dd last]$}\;
\caption{Compute $\SeededPref(C,\ell,S)$}\label{seededprefalg}
\end{algorithm}

 \begin{figure}[htpb]
 \centering
\begin{tikzpicture}[scale=1.05,-, >=stealth, auto, semithick,
    level/.style={sibling distance=80mm/#1},
    level 1/.style={sibling distance=110mm},
    level 2/.style={sibling distance=55mm},
    level 3/.style={sibling distance=25mm},
    level 4/.style={sibling distance=15mm},
    level 5/.style={sibling distance=7mm},
    tree/.style={draw,circle,fill=gray,minimum size=1mm,inner sep=0pt},
    red arrow/.style={red,thick,->},
    blue arrow/.style={blue,thick,->},
    scale=0.48, transform shape,
   ] 

   \node[tree] (r) {}
     child {node[tree]  {}
       child {node[tree]  {}
         child {node[tree] {}
           child {node[tree] {}
             child {node[tree] {}}
             child {node[tree] {}}
           }
           child {node[red,tree] (p2) {}
             child {node[red,tree] (p1) {}}
             child {node[tree] {}}
           }
         }
         child { node[red,tree] (p4) {}
           child {node[red,tree] (p3) {}
             child {node[tree] {}}
             child {node[tree] {}}
           }
           child {node[tree] {}
             child {node[tree] {}}
             child {node[tree] {}}
           }
         }
       }
       child {node[red,tree] (p6)  {}
         child { node[red,tree] (p5) {}
           child {node[tree] {}
             child {node[tree] {}}
             child {node[tree] {}}
           }
           child {node[tree] {}
             child {node[tree] {}}
             child {node[tree] {}}
           }
         }
         child { node[tree] {}
           child {node[tree] {}
             child {node[tree] {}}
             child {node[tree] {}}
           }
           child {node[tree] {}
             child {node[tree] {}}
             child {node[tree] {}}
           }
         }
       }
     }
     child {node [black,tree] (p8) {}
       child {node [red,tree] (p7)  {}
         child { node[tree] {}
           child {node[tree] {}
             child {node[tree] {}}
             child {node[tree] {}}
           }
           child {node[tree] {}
             child {node[tree] {}}
             child {node[tree] {}}
           }
         }
         child { node[tree] {}
           child {node[tree] {}
             child {node[tree] {}}
             child {node[tree] {}}
           }
           child {node[tree] {}
             child {node[tree] {}}
            child {node[tree] {}}
            }
         }
       }
     child {node[blue,tree] (p9) {}
         child { node [blue,tree] (p10) {}
           child {node[tree] {}
             child {node[tree] {}}
             child {node[tree] {}}
           }
           child {node[tree]  {}
             child {node[tree] {}}
             child {node[tree] {}}
           }
         }
         child { node [blue,tree] (p11) {}
           child {node [blue,tree] (p12) {}
             child {node [blue,tree] (p13) {}}
             child {node[tree] {}}
           }
           child {node[tree] {}
             child {node[tree] {}}
             child {node[tree] {}}
           }
         }
     }
   };

   \draw[red arrow] (p1)--(p2);
   \draw[red arrow] (p2)--(p3);
   \draw[red arrow] (p3)--(p4);
   \draw[red arrow] (p4)--(p5);
   \draw[red arrow] (p5)--(p6);
   \draw[red arrow] (p6)--(p7);
   \draw[red arrow] (p7)--(p8);
   \draw[blue arrow] (p8)--(p9);
   \draw[blue arrow] (p9)--(p10);
   \draw[blue arrow] (p10)--(p11);
   \draw[blue arrow] (p11)--(p12);
   \draw[blue arrow] (p12)--(p13);

   \foreach \i/\y in {1/0,2/0,3/10pt,4/0,5/10pt,6/0,7/10pt} {
     \node [left=of p\i,xshift=25pt,yshift=\y,fill=none, red,draw,circle,minimum size=1mm,inner sep=0pt] {\Large $+$};
   }

   \foreach \i/\y in {8/0, 9/0, 11/0, 12/0} {
     \node [right=of p\i,xshift=-25pt,yshift=\y,fill=none, blue,draw,circle,minimum size=1mm,inner sep=0pt] {\Large $-$};
   }
   \foreach \i/\y in {10/0} {
     \node [left=of p\i,xshift=25pt,yshift=\y,fill=none, blue,draw,circle,minimum size=1mm,inner sep=0pt] {\Large $+$};
   }
   \foreach \i/\y in {13/0} {
     \node [right=of p\i,xshift=-25pt,yshift=\y,fill=none, blue,draw,circle,minimum size=1mm,inner sep=0pt] {\Large $+$};
   }

   \draw[thick,violet,decorate,decoration={brace,raise=5pt,amplitude=5pt,mirror}] 
         (p1.center)--(p13.center) node[midway,fill=none,below,yshift=-20pt,line width=0pt] {\huge result};

  \draw[red,very thick,->] (-11cm,-6cm)--+(0,2cm) node[above] {\huge ascend};
  \draw[blue,very thick,->] (11cm,-4cm) node[above] {\huge descend} --+(0,-2cm) ;

  \end{tikzpicture}
 \caption{Interpretation of \cref{seededprefalg} on the range tree. Basic factors correspond to the nodes of the tree.
 If the basic interval in the currently queried node corresponds to $S[i\dd j]$,
 then in this moment we know that $C$ is a seed of $S[1\dd i-1]$. The query asks whether $C$ is a seed of the basic factor $S[i\dd j]$
 (in $\Oh(\gamma_n)$ time) and  whether the concatenation of $S[1\dd i-1]$ and this basic factor is seeded by $C$ (constant time). 
 If "yes", then    
 the known seeded prefix is extended and ends at $j$.}\label{bigdrzewo}
 \end{figure}
Each of the phases makes at most $\Oh(\log (\result / \ell))$ iterations and uses $\Oh(\gamma_n)$ time for each iteration. Thus, we have arrived at the required complexity.
\end{proof}

As a corollary of \cref{LRed,LSeededPref}, we obtain the following result.

\begin{lemma}\label{LCP}
After $\Oh(n \log n)$ time and space preprocessing of $T$, a query $\LCovP(\ell,S)$ 
can be answered in $\Oh(\log(\result/\ell) \,\gamma_n+1)$ time, where $\result=|\LCovP(\ell,S)|$.
\end{lemma}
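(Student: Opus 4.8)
The plan is to chain \cref{LRed} and \cref{LSeededPref} directly, with a small amount of bookkeeping to convert the time bound phrased in terms of $\result'=|\SeededPref(C,\ell,S')|$ into one phrased in terms of $\result=|\LCovP(\ell,S)|$. Given a query $\LCovP(\ell,S)$, I would first apply \cref{LRed}: in $\Oh(1)$ time this either answers the query outright (when the answer is at most $4\ell$, via \cref{short}), or reduces it to a $\SeededPref(C,\ell,S')$ query with $C=S[1\dd\ell]$ and $|S'|\le|S|$, from whose answer $\result'$ the value $\result$ can then be recovered in $\Oh(1)$ further time by one application of \cref{short} to a window of length $2^{p+1}=\Theta(\ell)$ around the approximate endpoint. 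I would then answer the $\SeededPref$ query using \cref{LSeededPref} in $\Oh(\log(\result'/\ell)\,\gamma_n+1)$ time. The preprocessing is inherited: $\Oh(n\log n)$ time and space for \cref{LSeededPref}, plus the $\Oh(n)$ preprocessing behind \cref{short} used in \cref{LRed}.

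The only thing left to verify is that this running time equals $\Oh(\log(\result/\ell)\,\gamma_n+1)$, i.e.\ that $\result'$ may be replaced by $\result$ inside the logarithm. In the case handled directly inside \cref{LRed} (answer at most $4\ell$) the whole query takes $\Oh(1)$ time and there is nothing to prove. Otherwise $\result>4\ell$, so in particular $\result/\ell\ge 1$; combining the inequality $|(i'+\result')-(i+\result)|<2^p$ from \cref{LRed} with $2^p<4\ell$ gives $\result'<\result+4\ell$, hence $\result'/\ell<\result/\ell+4=\Oh(\result/\ell)$ and therefore $\log(\result'/\ell)=\Oh(\log(\result/\ell)+1)$. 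Substituting this into the bound of \cref{LSeededPref} yields the claimed complexity.

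I do not expect a genuine obstacle here: this is a bookkeeping corollary whose mathematical content is entirely contained in the two cited lemmas. The one point deserving a moment of care is the interaction of the ``small answer'' shortcut of \cref{LRed} with the logarithm in the stated bound — without that shortcut $\log(\result/\ell)$ could be $0$ or negative, and the additive $+1$ in the statement is precisely what absorbs that regime — but once the shortcut is invoked the estimate above is routine.
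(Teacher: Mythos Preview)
Your proposal is correct and matches the paper's approach exactly: the paper states this lemma simply as a corollary of \cref{LRed} and \cref{LSeededPref} with no further argument, and your proof supplies precisely the bookkeeping (the small-answer shortcut plus the estimate $\result' < \result + 2^p$ via the displayed inequality in \cref{LRed}) needed to translate the $\log(\result'/\ell)$ bound into $\log(\result/\ell)$.
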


\section{Internal Shortest Cover Queries}
For a string $S$, by $\Borders(S)$ we denote a decomposition of the set of all border lengths of $S$ into $\Oh(\log |S|)$ arithmetic sequences $A_1,\ldots,A_k$ such that each sequence $A_i$ is either a singleton or, if $p$ is its difference, then the borders with lengths in $A_i\setminus \{\min(A_i)\}$ are periodic with the shortest period $p$. Moreover, $\max(A_i) < \min(A_{i+1})$ for every $i \in [1\dd k-1]$. See e.g.~\cite{DBLP:conf/cpm/CrochemoreIKKRRTW12}. The following lemma is shown by applying a period query (\cref{PQ}).

\begin{lemma}[\cite{tomeksthesis,DBLP:conf/soda/KociumakaRRW15}]\label{borders}
For any factor $S$ of $T$, $\Borders(S)$ can be computed in $\Oh(\log n)$ time after $\Oh(n)$-time preprocessing.
\end{lemma}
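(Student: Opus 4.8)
The plan is to reduce the computation of $\Borders(S)$ to a single period query (\cref{PQ}), exploiting the standard duality between borders and periods: writing $s=|S|$, for $0 \le b < s$ the prefix $S[1\dd b]$ is a border of $S$ if and only if $s-b$ is a period of $S$. Hence the set of border lengths of $S$ equals $\{\,s-p : p \text{ is a period of } S\,\}$ together with the trivial length $s$ (and, if desired, $0$), and a ``long'' border, i.e.\ one of length $>s/2$, corresponds to a ``short'' period of length $<s/2$ --- these are exactly the periods that can cluster into nontrivial progressions.

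First I would run a period query on $S$, obtaining in $\Oh(\log n)$ time the set of all periods of $S$ as $\Oh(\log s)$ pairwise disjoint arithmetic sequences $P_1,\ldots,P_k$, sorted increasingly. Reversing the order of the elements inside each $P_i$ and reversing the list itself turns each $P_i=\{a,a+d,\ldots,a+td\}$ into the arithmetic sequence $\{\,s-a-td,\ldots,s-a-d,s-a\,\}$ of border lengths with the same common difference $d$; I would then append the singletons corresponding to the trivial border lengths dictated by the paper's convention. Since the $P_i$ were sorted and disjoint, the resulting sequences of border lengths are again pairwise disjoint and satisfy $\max A_i<\min A_{i+1}$, so two of the three required properties of $\Borders$ hold for free; it remains to enforce the periodicity condition, possibly after a local refinement.

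Fix one such sequence $A$ with common difference $p$ and $|A|\ge 2$, and let $b\in A\setminus\{\min A\}$; then $b-p\in A$ is also a border length of $S$. Since a shorter border of a string is a border of any longer border of it, $S[1\dd b-p]$ is a border of $S[1\dd b]$, so $p$ is a period of $S[1\dd b]$. To get that $p$ is the \emph{shortest} period of $S[1\dd b]$ and that $S[1\dd b]$ is genuinely periodic (i.e.\ $b\ge 2p$), I would invoke the periodicity lemma of Fine and Wilf: a hypothetical period $p'<p$ of $S[1\dd b]$ together with the period $p$ forces $\gcd(p,p')<p$ to be a period of $S[1\dd b]$ as soon as $p+p'\le b+\gcd(p,p')$, and a short case analysis on the lengths (using that consecutive sequences $P_i$ are separated and hence each $P_i$ captures a full ``periodic stretch'') rules this out; the at most one element $b\in A$ with $b<2p$ --- necessarily the second smallest, and only when $\min A<p$ --- is split off into singletons, after which the tail $\{\min A + 2p, \min A + 3p,\ldots\}$ still satisfies the condition. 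This structural analysis is precisely the one carried out in~\cite{DBLP:conf/cpm/CrochemoreIKKRRTW12} and inside the period-query construction of~\cite{DBLP:conf/soda/KociumakaRRW15,tomeksthesis}, so an alternative is simply to note that the period query there already delivers the $P_i$ in the dual form of the $\Borders$ decomposition and no refinement is needed.

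Overall the procedure touches $\Oh(\log s)=\Oh(\log n)$ arithmetic sequences and spends $\Oh(1)$ time on each (one reversal, a couple of length comparisons, at most one split), on top of the $\Oh(\log n)$-time period query whose data structure is built in $\Oh(n)$ time (\cref{PQ}); this yields the claimed bounds. I expect the main obstacle to be exactly this last step --- bridging the ``raw'' logarithmically many arithmetic progressions returned by a period query to the more rigid $\Borders$ object, in particular certifying the shortest-period property and correctly isolating the short borders of length $<2p$ --- which is where Fine--Wilf does the real work.
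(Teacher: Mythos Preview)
Your proposal is correct and takes essentially the same approach as the paper: the paper's entire argument is the one-line remark ``The following lemma is shown by applying a period query (\cref{PQ})'', deferring all details to \cite{tomeksthesis,DBLP:conf/soda/KociumakaRRW15,DBLP:conf/cpm/CrochemoreIKKRRTW12}. Your write-up simply unpacks this citation, spelling out the border--period duality and the Fine--Wilf-based structural guarantee that the cited works already establish, so there is nothing to compare or correct.
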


\subsection{Simple Algorithm with $\Oh(\log^2 n\, \gamma_n)$ Query Time}
Let us start with a much simpler but slower algorithm for answering $\shcov$ queries using $\IsCover$ queries. We improve it in \cref{shortest} by using $\LCovP$ queries and applying an algorithm for computing shortest covers that resembles, to some extent, computation of the shortest cover from~\cite{DBLP:journals/ipl/ApostolicoFI91}.

\begin{proposition}
Let $T$ be a string of length $n$. After $\Oh(n \log n)$-time preprocessing, for any factor $S$ of $T$ we can answer a $\shcov(S)$ query in $\Oh(\log^2 n \log \log n)$ time.
\end{proposition}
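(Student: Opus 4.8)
The plan is to reduce a $\shcov(S)$ query to only $\Oh(\log n)$ invocations of the $\IsCover$ query from \cref{testing}, exploiting the classical fact that the shortest cover of a string is \emph{superprimitive} (has no proper cover). Indeed, if $c$ were the shortest cover of $S$ and $c$ had a proper cover $c'$, then by \cref{obsbasic}(a) $c'$ would cover $S$ as well, contradicting minimality; hence $\shcov(S)$ is the length of a superprimitive border of $S$, where for convenience we also treat $S$ itself as an improper border of length $|S|$. Moreover, a superprimitive string is necessarily \emph{aperiodic}: if $c$ had $\per(c)=p$ and $|c|\ge 2p$, then the longest proper border of $c$, of length $|c|-p\ge p\ge 1$, would already cover $c$, because its occurrences at positions $1$ and $p+1$ cover $[1\dd|c|-p]$ and $[p+1\dd|c|]$ respectively, and these ranges overlap since $|c|-p\ge|c|/2$. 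So $\shcov(S)$ is the length of an aperiodic border of $S$ (or $|S|$).

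First I would compute $\Borders(S)=A_1,\ldots,A_k$ with $k=\Oh(\log n)$ via \cref{borders}. By definition of $\Borders(S)$, for every non-singleton $A_i$ all border lengths in $A_i\setminus\{\min(A_i)\}$ are periodic borders, so by the paragraph above none of them can equal $\shcov(S)$. Consequently $\shcov(S)\in\{\min(A_1),\ldots,\min(A_k)\}\cup\{|S|\}$, a set of $\Oh(\log n)$ candidates that, thanks to the condition $\max(A_i)<\min(A_{i+1})$, is already sorted in increasing order.

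Then I would scan these candidates in increasing order and, for each candidate length $\ell$, run the $\IsCover(\ell,S)$ query of \cref{testing}; I return the first $\ell$ for which the answer is ``yes'', and fall back to $|S|$ if every proper candidate fails (since $S$ trivially covers itself). Correctness is immediate: if $\ell^\star=\shcov(S)$, then every candidate smaller than $\ell^\star$ fails the test because $\ell^\star$ is by definition the smallest cover length, while the candidate $\ell^\star$ itself passes. For the running time, \cref{borders} takes $\Oh(\log n)$ time, and each of the $k=\Oh(\log n)$ calls to \cref{testing} takes $\Oh(\log(|S|/\ell)\,\gamma_n+1)=\Oh(\log n\log\log n)$ time, so the total query time is $\Oh(\log^2 n\log\log n)$. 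The preprocessing combines that of \cref{testing} ($\Oh(n\log n)$, which builds $\SeedSets(T)$ and the IPM structure) and that of \cref{borders} ($\Oh(n)$), hence stays within $\Oh(n\log n)$ time and space.

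The only nontrivial ingredient is the structural step that shrinks the search space from potentially $\Omega(n)$ border lengths down to the $\Oh(\log n)$ progression-minima of $\Borders(S)$: it combines ``shortest cover is superprimitive'' with ``superprimitive implies aperiodic'' and the fact that the $\Borders$ decomposition already isolates every aperiodic border as the minimum of its arithmetic progression. Once this is established, the rest is a short loop of black-box internal queries, and I expect no further difficulty; the later improvement to $\Oh(\log n\log\log n)$ in \cref{shortest} is what requires more work, replacing this plain loop by an Apostolico--Farach--Iliopoulos-style computation built on $\LCovP$ queries.
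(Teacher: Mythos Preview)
Your proposal is correct and follows essentially the same approach as the paper: compute $\Borders(S)$, observe that the shortest cover is aperiodic so only the $\Oh(\log n)$ values $\min(A_i)$ are candidates, and test each with $\IsCover$ via \cref{testing}. Your write-up is simply more detailed in justifying the aperiodicity observation (through superprimitivity) and in handling the trivial fallback to $|S|$, but the argument and complexity analysis coincide with the paper's.
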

\begin{proof}
Using \cref{borders} we compute the set $\Borders(S)=A_1,\ldots,A_k$ in $\Oh(\log n)$ time.
Let us observe that the shortest cover of a string is aperiodic.
 This implies that from each progression $A_i$ only the border of length $\min(A_i)$ can be the shortest cover of $S$.
 We use \cref{testing} to test each of the $\Oh(\log n)$ candidates in $\Oh(\log n \, \gamma_n)$ time.
\end{proof}

\subsection{Faster Queries}
\begin{theorem}\label{shortest}
Let $T$ be a string of length $n$. After $\Oh(n \log n)$-time preprocessing, for any factor $S$ of $T$ we can answer a $\shcov(S)$ query in $\Oh(\log n \log\log n)$ time.
\end{theorem}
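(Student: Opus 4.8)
The plan is to answer a $\shcov(S)$ query by reducing it to $\Oh(\log n)$ candidate tests performed with $\LCovP$ queries (\cref{LCP}), scheduled so that the running times telescope. First I would compute $\Borders(S)=A_1,\dots,A_k$ in $\Oh(\log n)$ time by \cref{borders}, so $k=\Oh(\log n)$; put $a_i=\min(A_i)$, noting $a_1<\dots<a_k$. Since the shortest cover of any string is aperiodic whereas every border whose length lies in $A_i\setminus\{a_i\}$ is periodic, $\shcov(S)$ has length in $\{a_1,\dots,a_k\}$. As $S[1\dd\shcov(S)]$ is itself a cover of $S$, a one-line argument (any cover of $S$ has length $\ge\shcov(S)$, and $\shcov(S)$ is a member of the candidate set that covers $S$) gives that $\shcov(S)$ equals the \emph{smallest} $a_m$ for which $S[1\dd a_m]$ covers $S$; and $S[1\dd a_m]$ covers $S$ exactly when $\LCovP(a_m,S)=|S|$ (this equality already forces $S[1\dd a_m]$ to be a suffix of $S$). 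The task is therefore to find the least $m$ with $\LCovP(a_m,S)=|S|$.

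The algorithm scans the candidates from smallest. With the current index $m$ it computes $r=\LCovP(a_m,S)$ by \cref{LCP}; if $r=|S|$ it outputs $a_m$, and otherwise it jumps to the least candidate larger than $r$ and repeats. The jump is justified by \cref{obsbasic}(a): $S[1\dd a_m]$ covers $S[1\dd r]$, hence also $S[1\dd a_{m'}]$ for every candidate $a_{m'}\in(a_m,r]$, so if any such $S[1\dd a_{m'}]$ were a cover of $S$ then $S[1\dd a_m]$ would be one too, contradicting $r<|S|$; thus no candidate in $(a_m,r]$ can be the answer. Termination with a valid answer is clear because $a_k$ always covers $S$: if $S$ is periodic with period $p$ then $|S|\in A_k$ and $a_k\ge p$, so occurrences of $S[1\dd a_k]$ spaced by $p$ cover $S$; if $S$ is aperiodic then $a_k=|S|$.

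For the complexity, $\Borders(S)$ costs $\Oh(\log n)$; there are $\Oh(k)=\Oh(\log n)$ iterations since $m$ strictly increases, and locating the next candidate above $r$ is an $\Oh(\log\log n)$ binary search among the sorted $a_i$. If $a_{m_1}<a_{m_2}<\dots$ are the tested candidates and $r_j=\LCovP(a_{m_j},S)$, then by \cref{LCP} iteration $j$ runs in $\Oh(\log(r_j/a_{m_j})\,\gamma_n+1)$ time, while the jump rule guarantees $a_{m_j}\le r_j<a_{m_{j+1}}$; hence $\sum_j\log(r_j/a_{m_j})\le\sum_j\log(a_{m_{j+1}}/a_{m_j})=\log(a_{m_{\mathrm{last}}}/a_{m_1})=\Oh(\log n)$, and the total query time is $\Oh(\log n\,\gamma_n)=\Oh(\log n\log\log n)$. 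The preprocessing is dominated by building $\SeedSets(T)$ behind \cref{LCP}, i.e.\ $\Oh(n\log n)$ time and space, plus $\Oh(n)$ for \cref{borders} and \cref{short}.

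The step I expect to be the crux is precisely this telescoping — equivalently, finding the right jump rule. Testing all $\Oh(\log n)$ candidates naively would cost $\Oh(\log^2 n\log\log n)$, since a single $\LCovP(\ell,S)$ query can take $\Theta(\log n\log\log n)$ when $\ell$ is small and the covered prefix is almost all of $S$. The resolution — in the spirit of the linear-time shortest-cover computation of Apostolico, Farach and Iliopoulos, which processes prefixes bottom-up while tracking how far the current candidate reaches — is that such an expensive candidate is expensive exactly because it covers a long prefix, and transitivity of the cover relation (\cref{obsbasic}(a)) then certifies that every smaller-indexed candidate up to that prefix length is useless; once this observation is in place, the cost analysis is the routine telescoping sum above.
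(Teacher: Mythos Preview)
Your proposal is correct and follows essentially the same approach as the paper: compute $\Borders(S)$, keep only the aperiodic candidates $a_i=\min(A_i)$, and iterate \textsc{CoveredPref} queries with a jump past the covered prefix, so that the $\log(r_j/a_{m_j})$ costs telescope to $\Oh(\log n\,\gamma_n)$. The only cosmetic differences are that the paper adds the sentinel $C_{k+1}=S$ (whereas you argue directly that $a_k$ already covers $S$) and advances $i$ by a linear scan rather than a binary search; your one-line ``hence also $S[1\dd a_{m'}]$'' implicitly uses that $a_{m'}$ is a border length (so $S[1\dd a_m]$ is a border, not just a seed, of $S[1\dd a_{m'}]$), which the paper makes explicit, but this is a presentational nuance rather than a gap.
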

\begin{proof}
Again we use \cref{borders} we compute the set $\Borders(S)=A_1,\ldots,A_k$,
 in $\Oh(\log n)$ time.
Let us denote the border of length $\min(A_i)$ by $C_i$ and $C_{k+1}=S$. We assume that $C_i$'s are 
sorted in increasing order of lengths. 
Then we proceed as shown in \cref{algo_shcov}. See also Fig.~\ref{schodki}.

\medskip
\begin{algorithm}[H]
$i:=1$\;
\While{\KwSty{true}}{
    \tcp{Invariant: $C_1,\ldots,C_{i-1}$ are not covers of $S$}
    \tcp{$C_i$ is an {\it active} border}
    $P:=\textsc{\LCovP}(|C_i|,S)$\;
    \lIf{$P=S$}{\Return{$|C_i|$}}
    \While{$|C_i| \le |P|$}{
        $i:=i+1$\;
    }
}
\caption{$\shcov(S)$ query.}\label{algo_shcov}
\end{algorithm}

\medskip
To argue for the correctness of the algorithm it suffices to show the invariant. The proof goes by induction. 

The base case is trivial. Let us consider the value of $i$ at the beginning of a step of the while-loop. If $P=S$, then by the inductive assumption $C_i$ is the shortest cover of $S$ and can be returned. Otherwise, $C_i$ is not a cover of $S$. 

Moreover, for each $j$ such that $|C_i| < |C_j| \le |P|$, since $C_j$ is a prefix of $P$, $C_i$ is a seed of $C_j$. Moreover, both $C_i$ and $C_j$ are borders of $S$, so $C_i$ is a border of $C_j$. Consequently, $C_j$ cannot be a cover of $T$, as then $C_i$ would also be a cover of $T$ by Observation~\ref{obsbasic}. This shows that the inner while-loop correctly increases $i$.

The algorithm stops because at each point $|P| \ge |C_i|$ and $i$ is increased.

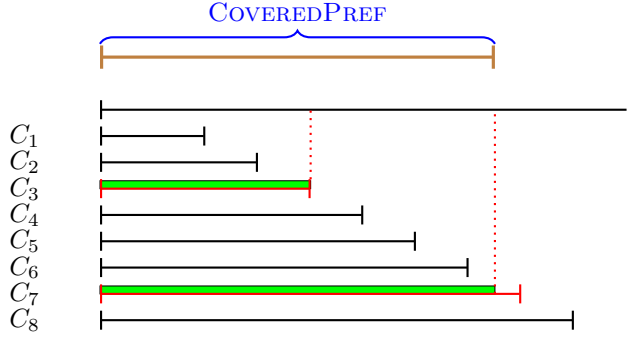
\begin{figure}
    \centering
\begin{tikzpicture}[scale=0.7]
    \draw[thick,|-] (0, 0)--+(10,0);

    \draw[very thick,brown,|-|] (0, 1)--+(7.5,0);
    \draw[thick,blue,decorate,decoration={brace,raise=5pt,amplitude=5pt}] 
      (0,1)--(7.5,1) node[midway,above,yshift=10pt] {\LCovP};

    \draw[fill=green] (0,-1.5) rectangle +(4, 0.15);    
    \draw[fill=green] (0,-3.5) rectangle +(7.5, 0.15);    

    \foreach[count=\y] \label/\x/\style in {1/2/, 2/3/, 3/4/red, 4/5/, 5/6/, 6/7/, 7/8/red, 8/9/} {
      \node (l\y) at (0, -\y*0.5) {};
      \node (r\y) at (\x, -\y*0.5) {};
      \node [left of=l\y] {$C_\label$};
      \draw[|-|,thick,\style] (l\y.center)--(r\y.center);
    }

    \draw[thick,dotted,red] (r3)--+(0,1.5);
    \draw[thick,dotted,red] (7.5,-3.5)--(7.5,0);
\end{tikzpicture}
    \caption{If $C_3$ is an active border, then the next active one is $C_7$. We skip $C_4,C_5,C_6$ as candidates for the shortest cover.}\label{schodki}
    \label{fig:my_label}
\end{figure}

Let $c_1,\ldots,c_p$ be equal to the length of an active border 
in the algorithm at the start of subsequent outer while-loop iterations and let $c_{p+1}=|S|$. 

Let us note that, for all $j=1,\ldots,p$, $|\textsc{\LCovP}(c_j,S)| \le c_{j+1}$. By \cref{LCP}, the total complexity of answering longest covered prefix queries in the algorithm is at most
$$
    \Oh\left(p+\gamma_n \sum_{j=1}^{p} \log \tfrac{c_{j+1}}{c_j} \right) 
    =
    \Oh(\log n + \gamma_n (\log c_{p+1} - \log c_1)) = \Oh(\log n\,\gamma_n).
$$

The preprocessing of Lemmas~\ref{LCP} and \ref{borders} takes $\Oh(n \log n)$ time. The conclusion follows.
\end{proof}

If $\shcov$ queries are to be answered in a batch, we can use off-line WA queries of \cref{WAQ} to save the $\gamma_n$-factor. We can also avoid storing the whole data structure $\SeedSets$ by using an approximate version of $\LCovP$ queries.

\begin{theorem}\label{corMinOffline}
For a string $T$ of length $n$, any $m$ queries $\shcov(T[i\dd j])$ can be answered in $\Oh((n+m) \log n)$ time and $\Oh(n+m)$ space.
\end{theorem}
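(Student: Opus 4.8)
The plan is to run, for each of the $m$ queries, the single-query procedure behind \cref{shortest} — that is, \cref{algo_shcov} together with the $\LCovP$ machinery of \cref{LCP,LSeededPref,LRed,base} — and to modify it in two essentially independent ways: one removes the $\gamma_n$ factor from the time bound, the other removes the $\Oh(n\log n)$ term from the space bound.

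\emph{Removing the $\gamma_n$ factor.} Inspecting \cref{base,LSeededPref,LCP,shortest}, the only super-constant primitive touched while answering a $\shcov$ query is the weighted-ancestor query hidden inside $\SeededBasic$ (and inside extracting an individual seed from a seed representation); everything else is an IPM query, an LCE query, a period/$\Borders$ query, or one of the constant-time tests of \cref{short,concatenation}. The analysis in the proof of \cref{shortest} shows that a single $\shcov$ query issues only $\Oh(\log n)$ weighted-ancestor queries in total (the $\log(c_{j+1}/c_j)$ terms telescope), so all $m$ queries issue $\Oh(m\log n)$ of them. We therefore use the off-line weighted-ancestor queries of \cref{WAQ} instead of the on-line ones. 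Since these queries are produced adaptively, we run the query procedures concurrently in $\Oh(\log n)$ synchronous rounds, each round collecting at most one pending weighted-ancestor query per still-running procedure and answering the batch off-line; the $\Oh(n)$-time tree preprocessing underlying \cref{WAQ} is performed once and shared by all batches, so the total cost of all batches is $\Oh(n+m\log n)$. The remaining per-query work is $\Oh(\log n)$, hence $\Oh(m\log n)$ overall.

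\emph{Removing the $\Oh(n\log n)$ space.} The only part of the \cref{shortest} data structure larger than $\Oh(n)$ is $\SeedSets(T)$, and it is used only to implement $\SeededBasic(C,a,b)$, i.e.\ to decide whether a candidate $C$ is a seed of a basic factor $T[a\dd b]$. We replace $\LCovP$ by an \emph{approximate} variant $\LCovP'(\ell,S)$ that uses only $\Oh(n)$-space structures over $T$ itself — the suffix tree of $T$ with off-line weighted-ancestor support, plus IPM, LCE and period queries — and that returns a prefix $P'$ of $S$ which is \emph{genuinely} covered by $S[1\dd\ell]$ and satisfies $|P'|=\Omega(|\LCovP(\ell,S)|)$. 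For this we reprove the seed test without $\SeedSets$: two IPM queries at the two ends of $B$ give the first and last occurrence of $C$ in $B$, two LCE queries check the left and right overhang conditions, and the only non-local ingredient — that consecutive occurrences of $C$ inside $B$ are at distance at most $|C|$ — is replaced by a relaxed test evaluable with $\Oh(1)$ IPM/period queries (and a weighted-ancestor query on $\T(T)$), at the cost of losing the exact covered length but only up to a constant factor. Feeding $\LCovP'$ into \cref{algo_shcov} preserves correctness: underestimating a covered prefix can only make the inner \texttt{while} loop advance $i$ less far, so no candidate that could be the shortest cover is ever skipped (the candidates that are skipped stay certified non-covers by \cref{obsbasic}), and the value finally returned is a border $C_i$ for which $\LCovP'$ reported all of $S$ — hence, since $\LCovP'$ never over-reports, a true cover. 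Because $\LCovP'$ can only shrink the $\log(c_{j+1}/c_j)$ terms and the number of candidate borders of $S$ is still $\Oh(\log n)$, the running-time analysis of \cref{shortest} survives up to constants.

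\emph{Accounting and the hard part.} After these changes the preprocessing is only the $\Oh(n)$-space structures of \cref{RMQ,LCE,IPM,PQ,WAQ} and the suffix tree of $T$, while the $m$ queries and answers take $\Oh(m)$ space. To keep the $\Oh(\log n)$-word border decompositions (computed by \cref{borders}) from accumulating, the queries are handled in groups of $\Theta(n/\log n)$, each group's border data being released before the next group starts; since the off-line weighted-ancestor preprocessing is built once and reused, per group the cost is $\Oh(n)$, over $\Oh(m\log n/n + 1)$ groups it is $\Oh(n + m\log n) = \Oh((n+m)\log n)$, and the space stays $\Oh(n+m)$. I expect the chief obstacle to be the approximate seed test behind $\LCovP'$: the ``small-gap'' condition on the occurrences of a candidate inside a basic factor is precisely the non-local information that $\SeedSets(T)$ precomputes, and reproducing it up to a constant factor with only $\Oh(n)$-space tools over $T$ is the real work; a secondary technical point is making the round-based off-line weighted-ancestor batching coexist cleanly with the linear space budget.
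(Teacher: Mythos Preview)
Your removal of the $\gamma_n$ factor via batched off-line weighted-ancestor queries is the same idea the paper uses. The divergence, and the gap, is in the space reduction.

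You propose to eliminate $\SeedSets(T)$ altogether by replacing the exact seed test $\SeededBasic(C,a,b)$ with a ``relaxed test evaluable with $\Oh(1)$ IPM/period queries'' on the global structures over $T$. But you never say what this relaxed test is, and you yourself flag it as ``the real work'' and ``the chief obstacle.'' That is a genuine gap: the non-local condition you need (that the maximum gap between consecutive occurrences of $C$ in $B$ is at most $|C|$) is exactly the information that the per-basic-factor seed representations encode, and there is no evident way to recover even a constant-factor surrogate for it from IPM/LCE/period queries on $T$ alone. Without this ingredient your $\LCovP'$ is undefined, and the rest of the argument does not go through.

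The paper takes a quite different route that sidesteps this difficulty entirely. It \emph{keeps} $\SeedSets(T)$ but observes that a single level of the range tree (all basic factors of one fixed length) occupies only $\Oh(n)$ space and can be built in $\Oh(n)$ time; so it processes all $m$ queries simultaneously, constructing and discarding the tree level by level. To make the per-query algorithm compatible with monotone level access, it drops the Binary-Search phase of $\SeededPref$ (\cref{seededprefalg}), keeping only the Doubling phase; this yields a prefix that is a $3$-approximation of the true covered prefix. The approximation can wrongly fail the test $P=S$, so for each active border the paper concurrently runs $\IsCover(|C_i|,S)$ (which also walks the levels monotonically from both ends), discarding it as soon as the approximate $|P|<\tfrac13|S|$ and otherwise letting it finish; the growth condition $|C_{i+2}|>\tfrac32|C_i|$ bounds the number of such live $\IsCover$ calls by $\Oh(1)$. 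Thus no new seed test is needed---only a reordering of access to the existing one.
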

\begin{proof}
The $\,\gamma_n$ factor from the query complexity of \cref{shortest} stems from using 
on-line weighted-ancestor queries to access $\SeedSets$.
In the off-line setting it suffices to answer such queries in a batch; see \cref{WAQ}.

The only data structure from \cref{shortest} which takes $\omega(n)$ space is the $\SeedSets$ tree.
However each level of the tree, corresponding to basic factors of the same length, takes
only $\Oh(n)$ space and can be constructed independently in $\Oh(n)$ time. We will modify the algorithm for answering a $\shcov$ query so that it will access the levels of $\SeedSets$ in the order of increasing lengths of basic factors. Then we will be able to answer all $\shcov$ queries simultaneously in $\Oh(n+m)$ space.

\paragraph{\bf Approximate $\LCovP$ queries.} The building block of the data structure for answering $\shcov$ queries are $\LCovP$ queries. In \cref{LRed}, a $\LCovP(\ell,S)$ query is either answered in $\Oh(1)$ time via IPM queries, or reduced to a $\SeededPref(C,\ell,S')$ query for $S'$ being a maximal factor of $S$ that is a concatenation of basic factors of length $\Delta \ge \ceil{2\ell}$ and $C=S[1 \dd \ell]$.

Let $\result'=|\SeededPref(C,\ell,S')|$. 
Note that if $\result' < 5\Delta$, then a $\SeededPref$ query can be answered naively via IPM queries in $\Oh(1)$ time. Otherwise, the algorithm for answering $\SeededPref$ queries (\cref{LSeededPref}) consists of two phases; the first phase considers basic factors of non-decreasing lengths, but the second phase considers them according to non-increasing lengths.

This implementation does not satisfy the condition that the algorithm visits $\SeedSets$ level by level. However, we will show that the result of the first phase yields a constant factor approximation of the $\result=|\LCovP(\ell,S)|$.
Indeed, 
after the first phase a prefix $U$ of $S'$ is computed such that $C$ is a seed of $U$ and $|U| \ge \frac12 \result'$. If $S=LS'R$, this means that $C$ is a cover of a prefix $P$ of $LU$ of length at least $|L|+\frac12 \result'-(\ell-1)$. 

We obviously have $|P| \le \result$. Moreover, we have $\result \le |L| + \result' + 2\ell-1$. Hence,
$$3|P| > |L|+\tfrac32\result'-3\ell + 1 \ge |L|+\result' + 5\ell - 3\ell+1 \ge \result,$$
so indeed $|P|$ is a 3-approximation of $\result$. Let us call the resulting routine \textsc{ApproxCoveredPref}.

\paragraph{\bf Simultaneous calls to $\IsCover$.} We use the approximate routine instead of $\LCovP$ to compute $P$ in \cref{algo_shcov}. Then all candidates $C_i$ that are eliminated in the inner while-loop are eliminated correctly. The only issue is with correctness of the if-statement, since $P$ is only a lower bound for the result. To address this issue, for each active border in the algorithm we start running $\IsCover(|C_i|,S)$. According to \cref{testing}, this requires $\Oh(1)$-time checks using IPM queries on the edges of $S$ and checking if $C_i$ is a seed of concatenation of basic factors $F_1,\ldots,F_k$, each of length at least $2\ell$. To this end, we use \cref{base,concatenation}.

Let us recall that the sequence of lengths of basic factors $F_1,\ldots,F_k$ is first increasing and then decreasing, and it can be computed step by step in $\Oh(k)$ time. Hence, we can answer queries for $F_i$ starting from the ends of the sequence simultaneously with computing \textsc{ApproxCoveredPref}.

At the conclusion of the latter query, if $|P| < \frac13 |S|$, then we know that $\IsCover(|C_i|,S)$ would return false and we can discard the computations. Otherwise we continue computing $\IsCover(|C_i|,S)$ for this active border in the subsequent steps of the outer while-loop. This may yield several $\IsCover$ queries that are to be answered in parallel. However, the number of such queries is only $\Oh(1)$, since $|C_{i+2}| > \frac32 |C_i|$ (otherwise $C_{i+1}$ would have been in the same arithmetic sequence as one of $C_i$, $C_{i+2}$).
\end{proof}

\section{Internal All Covers Queries}
In this section we refer to $\allcov(S)$ as to the set of lengths of all covers of $S$.
This set consists of a logarithmic number of arithmetic sequences since the same is true for all borders.
In each sequence of borders we show that it is needed only to check $\Oh(1)$ borders to be a cover of $S$.
Hence we start with an algorithm testing any sequence of $\Oh(\log n)$ candidate borders.
\subsection{Verifying $\Oh(\log n)$ Candidates}
Assume that $B$ is an increasing sequence $b_1,\dd,b_k$ of lengths of borders of a given factor $S$ (not necessarily all borders), with $b_k=|S|$.
A {\it chain} in $B$ is a maximal subsequence $b_i,\ldots,b_j$  of consecutive elements of $B$ such that $S[1\dd b_t]$ is a cover of $S[1\dd b_{t+1}]$ for each $t\in [i\dd j)$.
From Observation~\ref{obsbasic} we get the following.
\begin{observation}\label{graph_properties}
The set of elements of a chain that belong to $\allcov(S)$ is a prefix of this chain.
Moreover, if the last element of a chain is not $|S|$, then it is not a cover of $S$.
\end{observation}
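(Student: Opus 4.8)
The plan is to obtain both statements by a direct application of Observation~\ref{obsbasic}, where the three strings $A$, $B$, $C$ are always prefixes of $S$ of the form $S[1\dd b_t]$. For the first statement I would use the transitivity of the cover relation, Observation~\ref{obsbasic}(a); for the second I would combine the border--cover property, Observation~\ref{obsbasic}(b), with the maximality of the chain. In both cases the argument is essentially an unwinding of the definitions of ``chain'' and of $\allcov$.

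For the first statement, fix a chain $b_i,\ldots,b_j$ and suppose $b_t$ is one of its elements with $S[1\dd b_t]\in\allcov(S)$; I want to show $S[1\dd b_s]\in\allcov(S)$ for every $s$ with $i\le s\le t$. By the definition of a chain, $S[1\dd b_r]$ is a cover of $S[1\dd b_{r+1}]$ for each $r\in[i\dd j)$, and since $B$ is increasing we have $b_s<b_{s+1}<\cdots<b_t$, so a short induction on $t-s$ using Observation~\ref{obsbasic}(a) shows that $S[1\dd b_s]$ is a cover of $S[1\dd b_t]$. If $b_t=|S|$ this is already the claim; if $b_t<|S|$, one more application of Observation~\ref{obsbasic}(a) to $S[1\dd b_s]$, $S[1\dd b_t]$, $S$ gives that $S[1\dd b_s]$ is a cover of $S$. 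Thus the set of chain elements lying in $\allcov(S)$ is downward closed in the index, i.e.\ a prefix of the chain.

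For the second statement, let $b_j$ be the last element of the chain and assume $b_j\ne|S|$. Since $b_k=|S|$ is the largest element of $B$, we have $b_j<|S|$, so $b_{j+1}$ exists in $B$, and by maximality of the chain $S[1\dd b_j]$ is \emph{not} a cover of $S[1\dd b_{j+1}]$. Suppose for contradiction that $S[1\dd b_j]\in\allcov(S)$. If $b_{j+1}=|S|$ then $S[1\dd b_{j+1}]=S$, so $S[1\dd b_j]$ would be a cover of $S[1\dd b_{j+1}]$, a contradiction. If $b_{j+1}<|S|$, then $S[1\dd b_{j+1}]$ is a border of $S$, $S[1\dd b_j]$ is a cover of $S$, and $b_j<b_{j+1}<|S|$, so Observation~\ref{obsbasic}(b) yields that $S[1\dd b_j]$ is a cover of $S[1\dd b_{j+1}]$, again a contradiction. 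Hence $S[1\dd b_j]\notin\allcov(S)$.

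I do not anticipate a genuine obstacle here; the only points that require a little care are checking the strict length inequalities needed to invoke Observation~\ref{obsbasic}, and the borderline case $b_{j+1}=|S|$ in the second part, where part (b) does not literally apply but the conclusion is immediate because $S$ is trivially a cover of itself.
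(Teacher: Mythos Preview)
Your proof is correct and takes essentially the same approach as the paper, which simply states that the observation follows from Observation~\ref{obsbasic} without spelling out the details. Your careful handling of the strict-inequality hypotheses of Observation~\ref{obsbasic} and of the borderline case $b_{j+1}=|S|$ is exactly what is needed to make the one-line justification rigorous.
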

\newcommand{\chains}{\mathit{chains}}
\newcommand{\covers}{\mathit{covers}}
\newcommand{\refine}{\mathit{refine}}
\newcommand{\computeUsing}{\mathit{computeUsing}}
\newcommand{\Prev}{\mathit{prev}}
\newcommand{\Next}{\mathit{next}}
 We denote by $\chains(B)$ and $\covers(B)$, respectively, 
 the partition of $B$ into chains and the set of elements $b\in B$ such that $S[1\dd b]$ is a cover of $S$.
 For $b \in B$ by $\Prev(b)$ we denote the previous element in its chain (if it exists). Moreover, for $C \subseteq B$ by $\Next_C(b)$ we denote the smallest $c \in C$ such that $c>b$.

\begin{lemma}\label{candidates}
Let $T$ be a string of length $n$. After $\Oh(n \log n)$-time preprocessing, for any factor $S$ of $T$ and a sequence $B$ of $\Oh(\log n)$ borders of $S$ we can compute $\covers(B)$ in $\Oh(\log n \log \log n\,\gamma_n)$ time.
\end{lemma}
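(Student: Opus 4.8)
The plan is to first compute $\chains(B)$, and then walk through the chains, using \cref{obsbasic} and \cref{graph_properties} to restrict the work inside each chain, with \cref{testing} ($\IsCover$ queries) and \cref{LCP} ($\LCovP$ queries) as black boxes. Throughout, write $B = b_1 < \dots < b_k$ with $b_k=|S|$.

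\textbf{Computing $\chains(B)$.} Two consecutive elements $b_t,b_{t+1}$ lie in the same chain iff $S[1\dd b_t]$ is a cover of the prefix $S[1\dd b_{t+1}]$ of $S$ (a factor of $T$), i.e.\ iff $\IsCover(b_t, S[1\dd b_{t+1}])$ returns true; by \cref{testing} this takes $\Oh(\log(b_{t+1}/b_t)\,\gamma_n+1)$ time. Summing over the $\Oh(\log n)$ consecutive pairs, the logarithms telescope to $\log(|S|/b_1)=\Oh(\log n)$, so $\chains(B)$ — together with $\Prev$ and the chain containing $b_k=|S|$ — is obtained in $\Oh(\log n\,\gamma_n)$ time by one left-to-right scan.

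\textbf{Extracting $\covers(B)$.} By \cref{obsbasic}(a), inside a chain $b_i,\dots,b_j$ the string $S[1\dd b_i]$ covers every $S[1\dd b_t]$; combined with \cref{graph_properties} this gives: the chain containing $b_k=|S|$ consists entirely of covers of $S$ (each $S[1\dd b_t]$ covers $S[1\dd|S|]=S$), so it is added to $\covers(B)$ for free; and every other (non-last) chain contributes to $\covers(B)$ a \emph{proper} prefix $b_i,\dots,b_m$ ($m<j$), which is non-empty only if $b_i$ itself is a cover of $S$, in which case the chain has at least two elements. To find these prefixes I would scan the non-last part of $B$ from left to right, maintaining the invariant that whenever the scan sits at an element that is a cover of $S$, that element is the first element of its chain; at the current element $b_i$ I call $\LCovP(b_i, S)$. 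If the returned prefix $P$ is proper, then $b_i$ is not a cover and, exactly as in the correctness argument for \cref{algo_shcov}, no border of $S$ of length in $(b_i,|P|]$ is a cover of $S$ (such a border is seeded by $S[1\dd b_i]$ and has $S[1\dd b_i]$ as a border, so \cref{obsbasic} would force $S[1\dd b_i]$ to be a cover of $S$); hence no element of $B$ in $(b_i,|P|]$ is a cover, and I jump to $\Next_B(|P|)$. If instead $P=S$, then $b_i$ is a cover, so by the invariant it is the first element of its chain $b_i,\dots,b_j$; since membership in the cover-prefix is monotone along the chain (\cref{graph_properties}), I binary-search $[i\dd j)$ with $\IsCover(\cdot,S)$ queries to locate $m$, add $b_i,\dots,b_m$ to $\covers(B)$, and resume the scan at $b_{m+1}$ (a non-cover). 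The invariant is preserved: every skipped or post-binary-search element is a non-cover, hence cannot precede a cover inside a common chain.

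\textbf{Complexity and the main obstacle.} The elements $b_i$ at which $\LCovP$ returns a proper prefix form an increasing sequence with $\Next_B(|P|)>|P|\ge|\LCovP(b_i,S)|$, so by \cref{LCP} their costs telescope to $\Oh(\log n\,\gamma_n)$; there are $\Oh(\log n)$ of them, absorbing the $+1$ terms. Each binary search uses $\Oh(\log|Q_c|)=\Oh(\log\log n)$ calls to $\IsCover(\ell,S)$ with $\ell$ at least the chain's first element $b_{i_c}$, costing $\Oh(\log\log n\cdot\log(|S|/b_{i_c})\,\gamma_n)$ in total. I expect the crux of the proof to be bounding $\sum_c \log(|S|/b_{i_c})$ over the chains that trigger a binary search, namely those whose first element $b_{i_c}$ is a cover of $S$: these first elements are distinct covers of $S$, hence linearly ordered by the "is-covered-by" relation (\cref{obsbasic}(b)), and between two consecutive of them lies the last element of the former's chain, which is a border of $S$ that is \emph{not} a cover of $S$. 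The technical heart should be a Fine--Wilf-type argument showing that a run of covers of $S$ whose lengths stay within a constant factor is confined to a long periodic prefix, so such a separating non-cover border can occur only after the length has grown by a constant factor; this makes the $b_{i_c}$ grow geometrically, giving $\sum_c\log(|S|/b_{i_c})=\Oh(\log n)$ and overall running time $\Oh(\log n\log\log n\,\gamma_n)$. The preprocessing is that of \cref{testing,LCP,IPM}, i.e.\ $\Oh(n\log n)$ time and space (dominated by $\SeedSets(T)$), and $\Prev$, $\Next$ are supported in $\Oh(\log n)$ time per query by scanning $B$.
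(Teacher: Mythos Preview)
Your approach diverges from the paper's, and the point you yourself flag as ``the crux'' is a genuine gap rather than a routine step.

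The paper does \emph{not} use $\LCovP$ queries or binary search here at all. Instead it defines an operation $\refine(B)$ that drops the last element of each chain and every second remaining element (keeping $|S|$), so that $|\refine(B)|\le |B|/2+1$. It then recurses: compute $\chains(B)$, set $B':=\refine(B)$, recursively obtain $C:=\covers(B')$, and finally recover $\covers(B)$ from $C$ by testing, for each discarded $b$, whether $\Prev(b)\in C$ and $S[1\dd b]$ covers $S[1\dd\Next_C(b)]$. The recursion depth is $\Oh(\log\log n)$; at every level both $\chains(\cdot)$ and the recovery step cost $\Oh(\log n\,\gamma_n)$ because the relevant $\IsCover$ calls are over pairwise disjoint length intervals that telescope. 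No structural claim about covers of $S$ is needed.

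Your scheme, by contrast, needs $\sum_c \log(|S|/b_{i_c})=\Oh(\log n)$ over all chains whose first element $b_{i_c}$ is a cover of $S$; this is where you invoke an unspecified ``Fine--Wilf-type argument'' for geometric growth of the $b_{i_c}$. But $B$ is an \emph{arbitrary} size-$\Oh(\log n)$ set of borders, and periodic covers of $S$ can sit in a long arithmetic progression (cf.\ \cref{perComb}); interleaving such covers with non-cover borders of $S$ from other progressions can produce many short chains whose first elements are covers yet lie within a constant factor of each other. Nothing in \cref{obsbasic} or \cref{graph_properties} forces the $b_{i_c}$ to double, and the separating non-cover you point to does not by itself yield a length gap. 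So the step you label as the technical heart is not a formality: without it your running time is only $\Oh(\log^2 n\,\log\log n\,\gamma_n)$. (The same issue hits the $\LCovP$ calls that return $P=S$, each costing $\Oh(\log(|S|/b_{i_c})\,\gamma_n)$.) The paper's recursive halving sidesteps this entirely, which is exactly what makes its argument short.
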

\begin{proof} We introduce two operations and use them in a recursive \cref{algo_allcov}.

\begin{description}
\item{$\mathbf{refine}(B)$:}  
%
 removes the last element of each chain in $B$ and every second element of each chain, except $|S|$ (see \cref{refine}). Note that $|\refine(B)| \le |B|/2+1$.
 \vskip 0.1cm
\item{$\mathbf{computeUsing}(B,C)$:}
%
Assuming that we know the set $C$ of all covers of $S$ among $\refine(B)$, for each element $b$ of $B\setminus \refine(B)$ we add it to $C$ if
$\Prev(b) \in C$ and $S[1 \dd b]$ is a cover of $S[1 \dd \Next_C(b)]$.
The set of all elements that satisfy this condition together with $C$ is returned as $\covers(B)$.
\end{description}

\begin{figure}
    \centering
\begin{tikzpicture}[scale=0.7]
    \foreach \x/\c in {1/1,2/2,3/3,4/4,5/5,6/6,7/7,8/8,9/9}{
        \draw[purple] (\x,0) node {$b_\c$};
    }
    \draw (9,0.5) node[above] {$|S|$};
    \draw (9,0.85) node[below] {$\mathrel{\rotatebox{90}{$=$}}$};
    \foreach \x in {1,2,3,4,6,7,8}{
        \draw[xshift=\x cm,-latex] (0.25,0) -- (0.75,0);
    }
    \draw[green!30!black,very thick,-latex] (10,0) -- node[above] {\large $\refine$} (12,0);
    \begin{scope}[xshift=12cm]
    \foreach \x/\c in {1/1,2/3,3/6,4/8,5/9}{
        \draw[purple] (\x,0) node {$b_\c$};
    }
    \draw (5,0.5) node[above] {$|S|$};
    \draw (5,0.85) node[below] {$\mathrel{\rotatebox{90}{$=$}}$};
    \foreach \x in {1,3,4}{
        \draw[xshift=\x cm,-latex] (0.25,0) -- (0.75,0);
    }
    \end{scope}
\end{tikzpicture}
    \caption{There is an arrow from $b_i$ to $b_{i+1}$ iff $S[1 \dd b_i]$ is a cover of $S[1 \dd b_{i+1}]$. Note that all elements in the last chain $b_6,b_7,b_8,b_9$ are cover lengths of $S$, $b_5$ is not, but some prefix of $b_1,b_2,b_3,b_4$ may be.}
    \label{refine}
\end{figure}
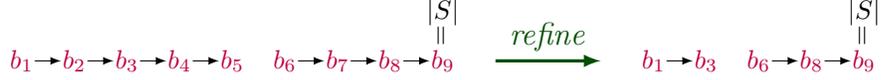

\begin{algorithm}[H]
Compute $\chains(B)$\;
\lIf{$B$ is a single chain (ending with $|S|$)}{\Return{$B$}}
$B' := \refine(B) $\tcp*{$|B'| \le |B|/2+1$}
$C:=\covers(B')$\;
\Return{$\computeUsing(B,C)$\;}
\caption{$\covers(B)$}\label{algo_allcov}
\end{algorithm}

\medskip
If $B=(b_1,\dd,b_k)$, then $\chains(B)$ can be constructed in $\Oh(\sum_{i=1}^{k-1}(\log{\frac{b_{i+1}}{b_i}}\,\gamma_n+1))=\Oh(\log n\, \gamma_n)$ time using \cref{testing}. Similarly, operation $\computeUsing(B,C)$ requires $\Oh(\log n\, \gamma_n)$ time since the intervals $[b,\Next_C(b)]$ for $b \in B \setminus \refine(B)$ such that $\Prev(b) \in C$ are pairwise disjoint.
The depth of recursion of \cref{algo_allcov} is $\Oh(\log \log n)$. This implies the required complexity.
\end{proof}

\subsection{Computing Periodic Covers}
Our tool for periodic covers are (as usual) {\it runs}. 
A \emph{run} (also known as a \emph{maximal repetition}) is a periodic factor $R=T[a\dd b]$ which can be extended neither to the left nor to the right without increasing the period $p=\per(R)$, i.e.,
$T[a-1]\neq T[a +p-1]$ and $T[b-p+1] \neq T[b+1]$
provided that the respective positions exist.
The following observation is well-known.

\begin{observation}\label{p_1}
Two runs in $T$ with the same period $p$ can overlap on at most $p-1$ positions.
\end{observation}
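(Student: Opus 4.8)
The plan is to argue by contradiction. Suppose two distinct runs $R_1 = T[a_1 \dd b_1]$ and $R_2 = T[a_2 \dd b_2]$, both with smallest period $p$, overlap on at least $p$ positions. First I would observe that neither run can be contained in the other: if, say, $R_2 \subseteq R_1$ with $R_2 \ne R_1$, then extending $R_2$ by one position to the side where there is room stays inside $R_1$ and hence keeps period $p$, contradicting the maximality of $R_2$. Consequently we may assume without loss of generality that $a_1 < a_2 \le b_1 < b_2$, so the overlap is exactly $T[a_2 \dd b_1]$ and the hypothesis reads $b_1 - a_2 + 1 \ge p$.

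Next I would show that the union $T[a_1 \dd b_2]$ has period $p$. The run $R_1$ gives $T[i] = T[i+p]$ for every $i \in [a_1 \dd b_1 - p]$, and $R_2$ gives the same for every $i \in [a_2 \dd b_2 - p]$. The assumption $b_1 - a_2 + 1 \ge p$, i.e.\ $a_2 \le b_1 - p + 1$, is precisely what makes these two index ranges jointly cover all of $[a_1 \dd b_2 - p]$, so $T[i] = T[i+p]$ holds throughout that range, which means $p$ is a period of $T[a_1 \dd b_2]$.

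Finally I would derive the contradiction from the maximality of $R_1$ to the right. Since $R_1$ is periodic, $|R_1| = b_1 - a_1 + 1 \ge 2p$, hence $b_1 + 1 - p \ge a_1$; also $b_1 + 1 \le b_2$. Therefore the equality $T[b_1 + 1] = T[b_1 + 1 - p]$ is one of those established above, which contradicts the defining property of the run $R_1$ that $T[b_1 - p + 1] \ne T[b_1 + 1]$ (the position $b_1 + 1$ exists because $b_1 < b_2 \le n$). Hence the overlap is at most $p - 1$ positions.

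I do not expect a genuine obstacle here; the only points needing a little care are ruling out the containment case and checking that the relevant index ranges are nonempty, both of which follow from the fact that a run has length at least twice its period.
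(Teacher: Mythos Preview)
Your argument is correct. The containment case is handled properly (extending $R_2$ inside $R_1$ keeps period $p$, and since periods can only grow under extension the smallest period stays $p$, contradicting maximality), the union-has-period-$p$ step is justified by the inequality $a_2 \le b_1 - p + 1$, and the final contradiction with $T[b_1-p+1] \ne T[b_1+1]$ is clean.

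As for comparison: the paper does not prove this observation at all. It is stated as ``well-known'' and left without argument, so there is nothing to compare against. Your write-up is the standard proof and would be a perfectly acceptable justification to include.
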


The \emph{exponent} $\expon(S)$ of a string $S$ is $|S|/\per(S)$.
The \emph{Lyndon root} of a string $S$ is the minimal cyclic shift of $S[1\dd \per(S)]$.

If $S=T[a \dd b]$ is periodic, then by $\run(S)$ we denote the run $R$ with the same period that contains $S$. We say that $S$ is \emph{induced} by $R$. A periodic factor of $T$ is induced by exactly one run~\cite{DBLP:journals/tcs/CrochemoreIKRRW14}. The $\run$-queries are essentially equivalent to two-period queries. By $\mathcal{R}(T)$ we denote the set of all runs in a string $T$.

\begin{lemma}[\cite{DBLP:journals/siamcomp/BannaiIINTT17,DBLP:journals/tcs/CrochemoreIKRRW14,DBLP:conf/focs/KolpakovK99}]\label{allRuns}
\begin{enumerate}[(a)]
\item $|\mathcal{R}(T)|\leq n$ and $\mathcal{R}(T)$ can be computed in $\cO(n)$ time.
\item After $\Oh(n)$-time preprocessing, $\run(S)$ queries can be answered in $\Oh(1)$ time.
\item The runs from $\mathcal{R}(T)$ can be grouped by their Lyndon roots in $\Oh(n)$ time.
\end{enumerate}
\end{lemma}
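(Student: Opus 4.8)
The plan is to establish the three parts essentially by assembling known results, with only short constant-time glue of our own.

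\emph{Part (a).} The bound $|\mathcal{R}(T)|\le n$ is precisely the ``runs theorem'' of~\cite{DBLP:journals/siamcomp/BannaiIINTT17}, which I would quote directly. For computing $\mathcal{R}(T)$ in $\Oh(n)$ time I would invoke the algorithm of~\cite{DBLP:conf/focs/KolpakovK99}, which extracts all runs from the Lempel--Ziv factorisation of $T$; over an integer alphabet this factorisation is itself computable in $\Oh(n)$ time, so no further argument is needed.

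\emph{Part (b).} Let $S=T[a\dd b]$ be periodic, so $\expon(S)\ge 2$. First I would obtain $p=\per(S)$ by a two-period query, answerable in $\Oh(1)$ time after $\Oh(n)$-time preprocessing~\cite{DBLP:journals/siamcomp/BannaiIINTT17,DBLP:journals/tcs/CrochemoreIKRRW14}. Then $\run(S)$ is the maximal factor that contains $S$ and still has period $p$, and its endpoints follow from two constant-time LCE queries (\cref{LCE}): the right endpoint is $a+p-1+\lcp(a,a+p)$ and the left endpoint is $a-\lcs(a-1,a+p-1)$ (clamped to $1$). A short Fine--Wilf argument, using $|S|\ge 2p$, shows this extension still has shortest period $p$ and is therefore a genuine run; that it is the unique run inducing $S$ is the fact recalled from~\cite{DBLP:journals/tcs/CrochemoreIKRRW14}. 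This yields $\Oh(1)$ query time after $\Oh(n)$ preprocessing.

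\emph{Part (c).} For a run $R=T[a\dd b]$ with $p=\per(R)$ we have $|R|\ge 2p$, hence $T[a+i\dd a+i+p-1]$ for $i=0,\dots,p-1$ realises all $p$ rotations of $R[1\dd p]$ inside $R\subseteq T$; in particular the Lyndon root of $R$ occurs as a factor $T[x_R\dd x_R+p-1]$ for some $x_R$ inside $R$. I would first compute such a starting position $x_R$ for every run simultaneously in $\Oh(n)$ total time, using the runs structure together with the suffix array of $T$ (this is the classical computation of Lyndon roots of all runs, see~\cite{DBLP:journals/tcs/CrochemoreIKRRW14}). Two runs then share a Lyndon root iff the factors $T[x_R\dd x_R+p_R-1]$ coincide (note that equal roots have equal length, hence equal period), which happens iff the suffix-tree leaf for $T[x_R\dd n]$ lies below the same explicit node of string depth $p_R$ in $\T(T)$. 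Locating that node for each run is a weighted ancestor query; the $\Oh(|\mathcal{R}(T)|)=\Oh(n)$ such queries are answered off-line in $\Oh(n)$ time by \cref{WAQ}, after which bucketing the runs by the node obtained groups them by Lyndon root in $\Oh(n)$ time.

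The main obstacle is Part (c): extracting the Lyndon root of each run separately and comparing roots as strings would cost $\Theta(\sum_{R}\per(R))$, which need not be $\Oh(n)$. The remedy is never to compare roots as strings --- root equality is reduced to sharing a fixed-string-depth locus in $\T(T)$, and all those loci are resolved with a single off-line batch of weighted ancestor queries. Parts (a) and (b) are, in comparison, direct quotations of the cited results plus constant-time LCE bookkeeping.
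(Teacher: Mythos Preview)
The paper does not prove this lemma at all; it is stated purely as a citation of known results from~\cite{DBLP:journals/siamcomp/BannaiIINTT17,DBLP:journals/tcs/CrochemoreIKRRW14,DBLP:conf/focs/KolpakovK99}, with no accompanying argument. Your proposal is therefore not a competing proof but a (correct) reconstruction of why those citations deliver the three claims, and in that sense it goes beyond what the paper itself does.

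One small technicality in Part~(c): bucketing solely by the returned weighted-ancestor node is not quite sufficient. Two runs with Lyndon roots of \emph{different} lengths $p\ne q$ issue WA queries at different depths, and these can in principle land on the same explicit node (e.g.\ when the relevant edge is long and neither depth $p$ nor $q$ hits an explicit vertex). Equal Lyndon roots do force equal periods, as you note, so the fix is simply to bucket by the pair (node, $p$) rather than by the node alone; this changes nothing in the running time. With that adjustment, all three parts are sound.
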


The following lemma implies that indeed for any string $S$, $\allcov(S)$ can be expressed as a union of $\Oh(\log |S|)$ arithmetic sequences. It also shows a relation between periodic covers and runs in $S$.
\begin{lemma}\label{perComb}
Let $S$ be a string, $A \in \Borders(S)$ be an arithmetic sequence with difference $p$, $A'=A \setminus \{\min(A)\}$ and $a' = \min(A')$. Moreover, let $x$ be the minimal exponent of a run in $S$ with Lyndon root being a cyclic shift of $S[1 \dd p]$.
\begin{enumerate}[(a)]
    \item\label{ita} If $a' \not\in \C(S)$, then $A' \cap \C(S) = \emptyset$.\vskip 0.2cm
    \item\label{itb} Otherwise, there exists $c \in ((x-2)p,xp] \cap A'$ such that
    $A' \cap \C(S) = \{a',a'+p,\ldots,c\}$.
\end{enumerate}
\end{lemma}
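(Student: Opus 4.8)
The plan is to analyze which borders in the arithmetic progression $A'$ can be covers of $S$, using the fact that all borders with length in $A'$ share the shortest period $p$ and are induced by runs of $S$ with Lyndon root a cyclic shift of $S[1\dd p]$. First I would record the following structural observations. Write $C_t = S[1\dd t]$ for $t\in A'$; each $C_t$ is periodic with period $p$ and exponent $t/p$, and by definition of $\Borders(S)$ also $a' = \min(A')$ is periodic with period $p$ (unless $A$ is a singleton, in which case $A'=\emptyset$ and the statement is vacuous). Next I would invoke \cref{obsbasic}: for $t<t'$ in $A'$, $C_t$ is a border of $C_{t'}$, so if $C_t$ is not a cover of $S$ then no shorter $C_{t''}$ ($t''<t$) is a cover either — but conversely I must show the set of covers is an \emph{upward}-closed-then-capped interval. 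The key monotonicity is: if $C_{t}$ is a cover of $S$ and $t'\in A'$ with $t<t'\le$ (length of the shortest cover candidate), then $C_{t'}$ is also a cover of $S$ provided $C_{t'}$ is still a border; this is where part (a) comes from — if the \emph{smallest} element $a'$ fails to be a cover, monotonicity downward (\cref{obsbasic}(b): a cover of $S$ restricted to a border) forces all of $A'$ to fail.

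For part (b), assuming $a'\in\C(S)$, I would argue that membership in $\C(S)$ propagates up the progression as long as the occurrences of $C_{a'}$ (equivalently, of the Lyndon root) densely tile $S$, and this density is exactly governed by the runs of $S$ with that Lyndon root. Concretely: $C_t$ is a cover of $S$ iff every position of $S$ lies in some occurrence of $C_t$. Since $C_t$ has period $p$, its occurrences inside any run $R$ of $S$ with the matching Lyndon root form an interval of consecutive $p$-shifted positions, covering a prefix/middle/suffix region of $R$ of length $|R|-t+p = (\expon(R)-t/p+1)p$ of that run's span — so $C_t$ covers all of $R$ iff $t \le (\expon(R)-1)p + p = \expon(R)\cdot p$, i.e. iff $t/p \le \expon(R)$. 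Outside the runs, occurrences of $C_t$ behave the same as occurrences of $C_{a'}$ did (no new constraint appears because any occurrence of $C_{a'}$ not extendable to $C_t$ lies within a run, by maximality of runs). Hence the set of $t\in A'$ with $C_t\in\C(S)$ is exactly $\{t \in A' : t \le xp\}$ intersected with the condition that $C_{a'}$ already covered the non-run part — which holds by the assumption $a'\in\C(S)$. Therefore $A'\cap\C(S) = \{a', a'+p, \ldots, c\}$ where $c$ is the largest element of $A'$ that is $\le xp$; since consecutive elements differ by $p$, this largest element lies in $(xp-p, xp] = ((x-1)p, xp]$, and a fortiori in $((x-2)p,xp]$ as claimed (the slack is to absorb the boundary case where $x$ is non-integer because the minimal-exponent run is only partially inside $S$, or where no run with that Lyndon root reaches $S$'s boundary).

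The main obstacle I anticipate is the careful handling of occurrences of $C_{a'}$ (or of the Lyndon root) that straddle run boundaries or sit outside all long runs: I need the precise claim that an occurrence of $C_{a'}$ in $S$ extends to an occurrence of $C_t$ (for $t\in A'$, $t>a'$) \emph{unless} it is contained in a run of $S$ with the given Lyndon root whose exponent is too small, and that every position covered by $C_{a'}$ but not by $C_t$ is therefore covered by $C_t$ via a \emph{different}, run-internal occurrence precisely when $t/p \le x$. This uses \cref{obsbasic}, \cref{p_1} (two runs with period $p$ overlap in $<p$ positions, so the run structure partitions the "periodic part" of $S$ cleanly), and the fact (cited via \cref{allRuns} and the surrounding text) that a periodic factor is induced by a unique run. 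Once that occurrence-bookkeeping is pinned down, the arithmetic identifying $c\in((x-2)p,xp]\cap A'$ is routine.
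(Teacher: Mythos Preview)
Your plan for part~\eqref{ita} and for the upper bound $c\le xp$ in part~\eqref{itb} matches the paper's argument (though your description of the monotonicity direction is garbled: the correct statement is that if $C_b$ is a cover of $S$ for some $b\in A'$ then $C_{a'}$ is too, because $C_{a'}$ is a cover of $C_b$ and you apply \cref{obsbasic}(a), not (b); you want downward closure, not upward).

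The genuine gap is in your lower bound for part~\eqref{itb}. Your claim that ``$C_t$ covers all of $R$ iff $t\le \expon(R)\cdot p$'' is false, and so is the resulting exact characterisation $A'\cap\C(S)=\{t\in A':t\le xp\}$. The problem is a \emph{phase offset}: the leftmost occurrence of $C_{a'}$ induced by a run $R=S[s\dd e]$ starts at some position $j_0\in[s,s+p-1]$, not necessarily at $s$. Hence $C_t$ occurs in $R$ iff $t\le |R|-(j_0-s)$, which can be as small as $|R|-p+1$. So the largest $t\in A'$ that is a cover can drop to roughly $(x-1)p$ rather than $xp$, and your claimed interval $((x-1)p,xp]$ can fail. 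The slack down to $(x-2)p$ is \emph{not} about non-integer exponents or boundary runs, as you suggest; it is exactly this alignment issue.

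The paper avoids this by arguing indirectly. It takes $c=\max(A'\cap\C(S))$, sets $C'=S[1\dd c+p]$, and shows that whenever $|R_i|\ge c+2p-1$ one has $\Cov(C',R_i)=\Cov(C,R_i)$ (the extra $p-1$ absorbs any phase offset). If every run satisfied this, $C'$ would be a cover, contradicting maximality of $c$; hence $xp<c+2p-1$, i.e.\ $c>(x-2)p$. You should replace your direct characterisation by this style of argument.
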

\begin{proof}
Part~\eqref{ita} follows from Observation~\ref{obsbasic}. Indeed, assume that $S$ has a cover of length $b \in A'$, with $b>a'$. As $S[1 \dd a']$ is a cover of $S[1 \dd b]$, we would have $a' \in \C(S)$.

We proceed to the proof of part~\eqref{itb}. Let $c$ be the maximum element of $A'$ such that $C:=S[1 \dd c]$ is a cover of $S$. By the same argument as before, we have that $A' \cap \C(S) = A' \cap [1,c]$. It suffices to prove the bounds for $c$.

Let $L$ be the minimum cyclic shift of $S[1 \dd p]$. We consider all runs $R_1,\ldots,R_k$ in $S$ with Lyndon root $L$. Each occurrence of $C$ in $S$ is induced by one of them. Each of the runs must hold an occurrence of $C$. Indeed, by Observation~\ref{p_1}, no two of the runs overlap on more than $p-1$ positions, so the $p$th position of each run cannot be covered by occurrences of $C$ that are induced by other runs. The shortest of the runs has length $xp$, so $c \le xp$. 

Furthermore, let $C'=S[1 \dd c']$ be a prefix of $S$ of length $c'=c+p$. If $p\cdot \expon(R_i) \ge c'+p-1$, then $R_i$ induces an occurrence of $C'$ and $\Cov(C',R_i) = \Cov(C,R_i)$. Hence, if $px \ge c'+p-1$ would hold, $C'$ would be a cover of $S$, which contradicts our assumption. Therefore, $px < c'+p-1 = c+2p-1$, so $c>(x-2)p$.
\end{proof}

\noindent
\cref{periodic} transforms \cref{perComb} into a data structure. We use static dictionaries.

\begin{lemma}[Ružić~\cite{DBLP:conf/icalp/Ruzic08}]\label{Ruzic}
A static dictionary of $n$ integers that supports $\Oh(1)$-time lookups can be stored in $\Oh(n)$ space and constructed in $\Oh(n (\log \log n)^2)$ time. The elements stored in the dictionary may be accompanied by satellite data.
\end{lemma}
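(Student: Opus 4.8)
The plan is to build a two-level perfect hash table in the style of Fredman--Komlós--Szemerédi (FKS) and to show that every hash function it relies on can be chosen \emph{deterministically} in close-to-linear time; the two $\log\log n$ factors enter precisely through this hash-function search. Throughout one uses that each key and each intermediate value fits in $\Oh(1)$ machine words, so a single word operation manipulates $\Theta(\log n)$ bits at unit cost, i.e.\ $\Theta(\log n/\log\log n)$ packed fields of $\Theta(\log\log n)$ bits each.

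\textbf{Universe reduction.} If the keys live in $\{0,\dots,2^w-1\}$ with $w$ possibly much larger than $\log n$, first apply a multiplicative hash $x\mapsto \bigl\lfloor (a\cdot x \bmod 2^w)/2^{\,w-2\lceil\log n\rceil}\bigr\rfloor$, which lands in a universe of size $\Oh(n^2)$. Here $h_a(x)=h_a(y)$ holds exactly when $a\cdot d \bmod 2^w$ falls in a fixed short interval around $0$, where $d=x-y$, so a counting argument over the $\Oh(n^2)$ pairwise differences shows that some multiplier $a$ separates all $n$ keys. The nontrivial point is to find such an $a$ fast and deterministically; this is done by the method of conditional probabilities implemented with packed words: fix $\Theta(\log\log n)$ high-order bits of $a$ at a time, maintaining for the surviving differences a compact certificate of which choices would create a collision and updating all certificates in one word-parallel sweep over $\Oh(n)$ words. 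When the original universe is already of size $n^{\Oh(1)}$ (as in our application) this step is vacuous.

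\textbf{The two FKS levels.} With keys in a polynomial universe, choose a first-level hash into $m=\Theta(n)$ buckets $B_1,\dots,B_m$ with $\sum_i|B_i|^2=\Oh(n)$; such a function exists because the average of $\sum_i\binom{|B_i|}{2}$ over a universal family is $\Oh(n)$, and it is located by the same bit-fixing, word-parallel search, after which the buckets are materialised by a packed radix (bucket) sort. For each $B_i$ allocate a secondary table of $\Theta(|B_i|^2)$ cells and find a collision-free hash on $B_i$ by brute force over a small family; since each bucket is small this amortises to $\Oh(n)$ over all buckets, and $\sum_i|B_i|^2=\Oh(n)$ keeps the total space linear. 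Each key sits in its cell together with its satellite data, and a lookup composes the universe-reduction hash, the first-level hash and the appropriate secondary hash, then compares the stored key with the query — all $\Oh(1)$ word operations.

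\textbf{Accounting and the main obstacle.} Each deterministic hash-function search runs $\Oh(\log\log n)$ rounds, and each round performs one word-parallel sweep of $\Oh(n)$ words (or one pass of a packed radix sort); summing over the $\Oh(\log\log n)$ rounds and over the two levels yields the claimed $\Oh(n(\log\log n)^2)$ construction time, with $\Oh(n)$ space. The genuine difficulty is exactly this derandomization — turning ``a separating, low-collision hash exists'' into ``one is computed in almost linear time'' — and it is overcome by exploiting that collisions of a multiplicative hash are governed by a short-interval condition on the products $a\cdot d$, so that all dangerous events for the $\Oh(n^2)$ difference classes can be summarised by a certificate fitting in $\Oh(n)$ words and advanced in parallel as the bits of $a$ are fixed. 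The remaining ingredients (the radix sort, the small per-bucket searches, and the FKS space bound) are routine.
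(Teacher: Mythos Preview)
The paper does not prove this lemma at all: it is quoted verbatim as a black-box result of Ru\v{z}i\'c~\cite{DBLP:conf/icalp/Ruzic08}, with no accompanying argument. So there is nothing to ``compare'' in the usual sense --- the paper's proof is the citation.

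That said, your sketch is in the right neighbourhood (derandomised FKS with word-level parallelism is indeed the theme), but as written it is not a proof. The crucial step --- ``fix $\Theta(\log\log n)$ bits of the multiplier at a time, maintaining certificates in $\Oh(n)$ words'' --- is asserted rather than carried out: you do not specify what the certificate data structure is, why it fits in $\Oh(n)$ words when there are $\Theta(n^2)$ pairwise differences, or why a single word-parallel sweep suffices to update it. The accounting that produces \emph{exactly} two $\log\log n$ factors (rather than one or three) hinges on these details. Ru\v{z}i\'c's actual construction is technically rather different from the generic ``method of conditional probabilities on FKS'' you outline; it uses a carefully designed family of hash functions and a cluster-splitting argument, and the $(\log\log n)^2$ arises from a two-stage process that is not simply ``$\log\log n$ rounds at each of two levels''. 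If you want to supply a self-contained proof here rather than cite the result, you would need to reproduce those specifics; the present sketch would not survive scrutiny as a standalone argument.
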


\begin{lemma}[Computing $\Oh(\log n)$ Candidates]\label{periodic}\mbox{ \ }\\
For any factor $S$ of $T$ we can compute in $\Oh(\log n )$ time $\Oh(\log n)$
borders of $S$ which are \mbox{\sl candidates} for covers of $S$.  
After knowing which of these candidates are covers of $S$, we can in $\Oh(\log n)$ time represent (as $\Oh(\log n)$ arithmetic 
progressions) all borders which are covers of $S$.
The preprocessing time is $\Oh(n (\log \log n)^2)$ and the space used is $\Oh(n)$.
\end{lemma}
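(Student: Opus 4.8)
The plan is to combine \cref{borders}, \cref{perComb}, and the runs machinery of \cref{allRuns}, using the static dictionary of \cref{Ruzic} to make the necessary lookups constant-time. First I would compute $\Borders(S) = A_1,\ldots,A_k$ in $\Oh(\log n)$ time via \cref{borders}; recall $k = \Oh(\log n)$. For every singleton $A_i$ the unique element is a candidate. For every non-singleton $A_i$ with difference $p_i$, set $a_i' = \min(A_i\setminus\{\min(A_i)\})$; by \cref{perComb}, among the elements of $A_i\setminus\{\min(A_i)\}$ only $a_i'$ and one further ``boundary'' value $c_i \in ((x_i-2)p_i, x_i p_i] \cap A_i'$ matter, where $x_i$ is the minimal exponent of a run of $S$ whose Lyndon root is a cyclic shift of $S[1\dd p_i]$. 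So the candidate set consists of: all $\min(A_i)$, the values $a_i'$, and for each non-singleton $A_i$ the $\Oh(1)$ elements of $A_i'$ lying in the window $((x_i-2)p_i, x_i p_i]$. This is $\Oh(\log n)$ candidates, and producing them reduces to being able to retrieve $x_i$ (equivalently, the minimal exponent over runs with a prescribed Lyndon root, restricted to runs inside $S$) quickly for each of the $\Oh(\log n)$ periods $p_i$.

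The key preprocessing is a data structure that, given a factor $S = T[a\dd b]$ and a period $p$ realized by some border of $S$, returns $\min\{\expon(R) : R \in \mathcal{R}(T),\ R \subseteq S,\ \text{Lyndon root of } R = \text{(the appropriate cyclic shift of) } S[1\dd p]\}$. My plan: using \cref{allRuns}(c), group all runs of $T$ by Lyndon root in $\Oh(n)$ time; for each Lyndon-root class, store the runs sorted by starting position, and build on the sorted list a range-minimum structure (\cref{RMQ}) over exponents. Identify the relevant class: $S[1\dd p]$ is a prefix of $S$ with $\per(S[1\dd p]) = p$ only in the trivial sense, but what we actually need is the Lyndon root of the run $\run(S[1\dd 2p-?])$ — more carefully, since $p \in \Borders(S)$ with $p$ a shortest-period of the periodic borders, the prefix $S[1\dd a_i']$ is periodic with period $p$, so $\run(S[1\dd a_i'])$ (queried in $\Oh(1)$ via \cref{allRuns}(b)) gives us a concrete run whose Lyndon root names the class; store these class identifiers in a static dictionary keyed by the run, built in $\Oh(n(\log\log n)^2)$ time by \cref{Ruzic}. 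Then restricting to runs $R\subseteq S$ is a matter of binary-searching the sorted-by-position list of that class for the sub-range of runs contained in $[a\dd b]$ and taking a range-minimum of exponents — $\Oh(\log n)$ per query over all $\Oh(\log n)$ periods, or $\Oh(\log\log n)$ per lookup with a predecessor structure, well within the stated $\Oh(\log n)$ total budget. This handles both claims: producing the $\Oh(\log n)$ candidates, and — once an oracle tells us which candidates are covers — reconstructing the arithmetic progressions, since by \cref{perComb} each $A_i' \cap \C(S)$ is the prefix $\{a_i', a_i'+p_i, \ldots, c_i\}$ of $A_i'$ determined by whether $a_i' \in \C(S)$ and by the exact boundary element in the $\Oh(1)$-size window, all computable in $\Oh(1)$ time per sequence once the relevant candidate answers are known.

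For correctness I would invoke \cref{perComb} directly: part~\eqref{ita} says that if $a_i'$ is not a cover then no element of $A_i'$ is, so we output nothing from that sequence; part~\eqref{itb} says that otherwise the covers in $A_i'$ form the initial run up to the largest $c_i \in A_i'$ that is a cover, and $c_i$ lies in $((x_i-2)p_i, x_i p_i]$, a window containing at most two elements of the arithmetic progression $A_i'$ (since consecutive elements differ by $p_i$ and the window has length $2p_i$), so querying those $\Oh(1)$ candidates pins down $c_i$ exactly. Together with $\min(A_i)$ (handled as its own singleton test) this yields the full decomposition of $\C(S)$ as $\Oh(\log n)$ arithmetic progressions.

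The main obstacle I anticipate is the bookkeeping around identifying the correct run/Lyndon-root class for each period $p_i$ and making the ``runs contained in $S$'' restriction efficient: a run with the right Lyndon root may extend outside $S$, and only its portion inside $S$ is relevant, so the exponent that matters is that of the induced periodic factor $S \cap R$, not of $R$ itself. This requires care — one must argue (using \cref{allRuns}(b) and Observation~\ref{p_1}) that for the minimal-exponent count the clipped runs at the two ends of $S$ either still contain an occurrence of the candidate or can be safely ignored, and that the interior runs are unaffected; matching this precisely to the statement of \cref{perComb}, where $x$ is defined as ``the minimal exponent of a run in $S$'', is the delicate point. The remaining ingredients — static dictionaries, RMQ over exponent arrays, sorting runs by position within each Lyndon class — are routine and fit the $\Oh(n(\log\log n)^2)$ preprocessing and $\Oh(n)$ space bounds.
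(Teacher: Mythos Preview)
Your plan is essentially the paper's approach: use \cref{perComb} to reduce each non-singleton $A_i$ to $\Oh(1)$ candidates (namely $\min(A_i)$, $a_i'$, and the at most two elements of $A_i'$ in $((x_i-2)p_i,x_ip_i]$), and compute $x_i$ by grouping runs by Lyndon root, building RMQ over their exponents, and looking up the relevant range. Two points, however, keep your proposal from matching the stated bounds.

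\textbf{Time per period.} You locate the sub-range of runs inside $S$ by binary search or a predecessor structure, which costs $\Oh(\log n)$ or $\Oh(\log\log n)$ per arithmetic sequence; over $\Oh(\log n)$ sequences this is $\Oh(\log^2 n)$ or $\Oh(\log n\log\log n)$, not the claimed $\Oh(\log n)$. The paper gets $\Oh(1)$ per sequence by a different route: it first computes the prefix run of $S$ via an LCE query $\lcp(i,i+p)$; if this already covers $S$, $x$ is immediate. Otherwise, letting $i'$ be the first position after the prefix run, it observes (this is the embedded Claim) that if $a'\in\allcov(S)$ then $T[i'\dd i'+p]$ lies in some run of $T$ with Lyndon root $L$, which can be recovered in $\Oh(1)$ time via $\lcp/\lcs$ from $i'$. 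The suffix run of $S$ is found symmetrically. Now the key trick: the starting positions of these two concrete runs are looked up \emph{exactly} in the Ru\v{z}i\'c dictionary on $A_L$, yielding their indices in the sorted array in $\Oh(1)$ time --- no predecessor search is needed because you already hold the exact key.

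\textbf{Clipped runs.} You flag this correctly as the delicate point but leave it open. The paper's resolution is simple once the above is in place: the prefix and suffix runs of $S$ may extend outside $S$, so their ``in-$S$'' exponents are computed directly from the LCE values; the RMQ on $E_L$ is applied only to the strictly interior runs (those whose indices lie strictly between the two looked-up indices), which are entirely contained in $S$ and hence have the correct exponent stored. The minimum of these three quantities is $x$.

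With these two refinements your argument becomes exactly the paper's proof; without them the approach is sound in spirit but misses the $\Oh(\log n)$ query bound and leaves the correctness of the exponent computation incomplete.
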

\begin{proof} 
It is enough to show that 
for any factor $S$ of $T$ and a single arithmetic sequence $A \in \Borders(S)$
we can compute in $\Oh(1)$ time up to four candidate borders.
Then, after knowing which of them are covers of $S$, we can in $\Oh(1)$ time represent (as a prefix subsequence of $A$) all borders in $A$ 
which are covers of $S$.
We first describe the data structure and then the query algorithm.

\smallskip\noindent
\textbf{Data structure.}
Let $T[a_1 \dd b_1],\ldots,T[a_k \dd b_k]$ be the set of all runs in $T$ with Lyndon root $L$, with $a_1 < \dots < a_k$ (and $b_1 < \dots < b_k$). The part of the data structure for this Lyndon root consists of an array $A_L$ containing $a_1,\ldots,a_k$, an array $E_L$ containing the exponents of the respective runs, as well as a dictionary on $A_L$ and a range-minimum query data structure on $E_L$. Formally, to each Lyndon root we assign an integer identifier in $[1,n]$ that is retained with every run with this Lyndon root and use it to index the data structures. We also store a dictionary of all the runs. The data structure takes $\Oh(n)$ space and can be constructed in $\Oh(n(\log \log n)^2)$ time by \cref{RMQ,Ruzic,allRuns} (RMQ, computing runs and grouping runs by Lyndon roots, and static dictionary, respectively).
We also use LCE-queries on $T$ (\cref{LCE}).

\medskip
\noindent
\textbf{Queries.}
Let us consider a query for $S=T[i \dd j]$ and $A \in \Borders(S)$. If $|A|=1$, we have just one candidate. Otherwise, $A$ is an arithmetic sequence with difference $p$. Let $a = \min(A)$, $A'=A \setminus \{a\}$, and $a'=\min(A')$. We select borders of length $a$ and $a'$ as candidates. If $a' \not \in \allcov(S)$, then \cref{perComb}\eqref{ita} implies that $A \cap \C(S) \subseteq \{a\}$. We also select borders of lengths in $A \cap ((x-2)p,xp]$ as candidates, where $x$ is defined as in \cref{perComb}. Note that there are at most two of them. Let $c$ be the maximum candidate which turned out to be a cover of $S$. Then $A \cap \C(S) = A \cap [1,c]$ by \cref{perComb}\eqref{itb}.

What is left is to compute $x$, that is, the minimum exponent of a run in $S$ with Lyndon root $L$ that is a cyclic shift of $S[1 \dd p]$. Since $|A| \ge 2$, $S$ has a prefix run with Lyndon root $L$. Then $\ell=\min(p+d,|S|)$, where $d=\lcp(i,i+p)$, is the length of the run. If $\ell=|S|$, then $x=\ell/p$ and we are done. Otherwise, let $i'=i+p+d$. We make the following observation.

\begin{claim}
If $a' \in \allcov(S)$, then $T[i' \dd i'+p]$ is contained in a run in $T$ with Lyndon root $L$.
\end{claim}
\begin{proof}
Runs in $T$ with Lyndon root $L$ must cover $S=T[i \dd j]$, since $S[1 \dd a']$ is a periodic cover of $S$ and each of its occurrences is induced by a run. The prefix run in $S$ corresponds to a run ending at position $i'-1$ in $T$. By Observation~\ref{p_1}, the run with Lyndon root $L$ containing the position $i'$ must end after position $i'+p$.
\end{proof}

We identify the run $T[a \dd b]$ with period $p$ containing $T[i' \dd i'+p]$ by asking $\lcp(i',i'+p)$ and $\lcs(i',i'+p)$ queries. This lets us recover the identifier of its Lyndon root $L$. Similarly we compute the suffix run with Lyndon root $L$ in $S$ and the previous run $T[a' \dd b']$ with Lyndon root $L$ in $T$. Using the dictionary on $A_L$, we recover the range in the array that corresponds to elements from $a$ to $a'$. This lets us use a range minimum query on this range in $E_L$ and use it together with the exponents of the prefix and suffix runs of $S$ to compute $x$. All the operations in a query are performed in $\Oh(1)$ time.
\end{proof}

\subsection{Main Query Algorithm}
The main result of this section follows from Lemma~\ref{candidates} and Lemma~\ref{periodic}.
\begin{theorem}
Let $T$ be a string of length $n$. After $\Oh(n \log n)$-time preprocessing, for any factor $S$ of $T$ we can answer a query $\allcov(S)$, with output represented as a union of $\Oh(\log n)$ pairwise disjoint arithmetic sequences, in $\Oh(\log n (\log \log n)^2)$ time.
\end{theorem}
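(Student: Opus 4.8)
The plan is to combine the two preceding lemmas in the natural way. First I would apply \cref{borders} to compute $\Borders(S) = A_1, \ldots, A_k$ in $\Oh(\log n)$ time, so that the set of all border lengths of $S$ is partitioned into $\Oh(\log n)$ arithmetic sequences. By \cref{periodic}, from this decomposition I extract in $\Oh(\log n)$ total time a set $B$ of $\Oh(\log n)$ candidate border lengths — at most four per sequence $A_i$ — with the property that once I know which elements of $B$ are covers of $S$, the full set $\allcov(S)$ can be recovered in $\Oh(\log n)$ time as a union of $\Oh(\log n)$ arithmetic sequences (each being a prefix subsequence of some $A_i$, by \cref{perComb}).

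Next I would feed the sequence $B$ into \cref{candidates}. Since $|B| = \Oh(\log n)$ and $B$ is an increasing sequence of border lengths of $S$ with largest element $|S|$ (we add $|S|$ if it is not already present, which only changes things by $\Oh(1)$), \cref{candidates} computes $\covers(B)$ — the subset of $B$ consisting of lengths $b$ with $S[1 \dd b]$ a cover of $S$ — in $\Oh(\log n \log\log n\, \gamma_n) = \Oh(\log n (\log\log n)^2)$ time, using the $\Oh(n\log n)$-space, $\Oh(n \log n)$-time preprocessing of $\SeedSets(T)$ (recall $\gamma_n = \Oh(\log\log n)$). Finally I feed $\covers(B)$ back into the reconstruction step of \cref{periodic} to obtain the $\Oh(\log n)$ arithmetic sequences representing $\allcov(S)$, in a further $\Oh(\log n)$ time.

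For the complexity bookkeeping: the query time is dominated by the call to \cref{candidates}, giving $\Oh(\log n (\log\log n)^2)$; everything else is $\Oh(\log n)$. The preprocessing combines that of \cref{borders} ($\Oh(n)$, from \cref{PQ}), of \cref{periodic} ($\Oh(n(\log\log n)^2)$ time, $\Oh(n)$ space), and of \cref{candidates} ($\Oh(n\log n)$ time and space, coming from building $\SeedSets(T)$); the bottleneck is $\Oh(n\log n)$ time and $\Oh(n\log n)$ space, as claimed. The one point requiring a line of care is that \cref{candidates} assumes $B$ ends with $|S|$, so I would explicitly note that $|S|$ is always a (trivial) border and can be appended to $B$, and that the output format required — $\Oh(\log n)$ pairwise disjoint arithmetic sequences — is exactly what \cref{periodic}'s reconstruction delivers since the progressions $A_i$ are themselves pairwise disjoint and ordered.

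I do not expect a genuine obstacle here: the theorem is a clean corollary, and the two lemmas were evidently designed to dovetail. If anything, the only subtlety is making sure the "candidate" interface of \cref{periodic} (which produces $\Oh(1)$ candidates per progression, including the boundary ones near $(x-2)p$ and $xp$) is passed to \cref{candidates} as a single flat increasing sequence rather than progression-by-progression — but since $\Oh(\log n) \cdot \Oh(1) = \Oh(\log n)$ candidates total, this is immediate, and the chain structure exploited inside \cref{candidates} is computed from scratch on $B$ anyway.
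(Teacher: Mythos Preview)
Your proposal is correct and matches the paper's own proof, which simply states that the theorem follows from \cref{candidates} and \cref{periodic}. You have spelled out the bookkeeping (including the need to append $|S|$ to $B$) more explicitly than the paper does, but the approach is identical.
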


The transformation to the off-line model is similar as in \cref{corMinOffline}.

\begin{corollary}\label{allcoff}
For a string $T$ of length $n$, any $m$ queries $\allcov(T[i\dd j])$ can be answered in $\Oh((n+m) \log n\log \log n)$ time and $\Oh(n+m)$ space.
\end{corollary}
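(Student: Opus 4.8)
The plan is to follow the off-line transformation already used in \cref{corMinOffline}, adapting it to the $\allcov$ query algorithm which rests on \cref{candidates} and \cref{periodic}. The $\Oh(\log n(\log\log n)^2)$ on-line query time of the main theorem splits into two contributions: the $(\log\log n)^2$ preprocessing cost of the static dictionaries in \cref{periodic} (paid once, so harmless off-line), and the $\Oh(\log n \log\log n\,\gamma_n)$ cost of \cref{candidates}, whose $\gamma_n=\Oh(\log\log n)$ factor comes from on-line weighted-ancestor queries into $\SeedSets$. As in \cref{corMinOffline}, if all queries are known in advance we replace these WA queries by a single batch, which by \cref{WAQ} costs $\Oh(n+(\text{number of WA queries}))$ total; since each of the $m$ queries issues $\Oh(\log n\log\log n)$ such queries, the batched WA cost is $\Oh(n+m\log n\log\log n)$, and the remaining work in \cref{candidates} (IPM queries, \cref{short}, \cref{concatenation}, chain computation) is $\Oh(\log n\log\log n)$ per query without the $\gamma_n$ factor. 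This yields the claimed $\Oh((n+m)\log n\log\log n)$ time.

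For the space bound, I would reuse the observation from \cref{corMinOffline} that the only superlinear-space structure is the $\SeedSets$ tree, each of whose $\Oh(\log n)$ levels takes $\Oh(n)$ space and can be built independently in $\Oh(n)$ time. So I need to reorganize the computation so that all $m$ queries access the levels of $\SeedSets$ in a single monotone sweep over basic-factor lengths. This is exactly the difficulty handled in \cref{corMinOffline} via \textsc{ApproxCoveredPref}: the $\LCovP$ (equivalently $\IsCover$) subroutines underlying \cref{candidates} decompose $S$ into basic factors $F_1,\dots,F_k$ whose lengths first increase and then decrease, so the two monotone passes can be run simultaneously with the level-by-level sweep of $\SeedSets$, and the $\Oh(1)$ per-candidate corrections are done with IPM queries which need only $\Oh(n+m)$ space. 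The static dictionaries and RMQ structures of \cref{periodic}, the $\Borders$ machinery, and the runs structures of \cref{allRuns} all occupy only $\Oh(n)$ space, and the $\Oh(\log n)$ candidate borders plus their answers per query contribute $\Oh(\log n)$ extra words, i.e.\ $\Oh(m\log n)$ overall — which one must either absorb into $\Oh(n+m)$ by processing queries in blocks of size $n/\log n$, or note that $m\log n = \Oh((n+m)\log n)$ is within the time budget while the working space per block stays $\Oh(n)$.

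The main obstacle I anticipate is the interaction between the \emph{recursive} structure of \cref{algo_allcov} and the requirement of a single monotone sweep of $\SeedSets$. Each recursion level of \cref{candidates} calls $\IsCover$/$\LCovP$ on the same factor $S$ but with different candidate lengths $\ell$, and $\Oh(\log\log n)$ such levels are nested; a candidate-length $\ell$ determines which $\SeedSets$ levels (those of length $\Omega(\ell)$) are touched, so different recursion levels touch overlapping but not identical ranges of levels. To make everything fit one sweep I would flatten the recursion: collect, across all $m$ queries and all $\Oh(\log\log n)$ recursion levels, every $(C,\ell,S)$ triple that \cref{candidates} will ever need, bucket the induced basic-factor seed-tests $\SeededBasic(C,i,j)$ and concatenation tests by the length $|j-i+1|$ of the basic factor, and then process length classes in increasing order, answering each batch of $\SeededBasic$ tests while the corresponding level of $\SeedSets$ is in memory (using off-line WA queries within that level), caching the Boolean answers. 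Because the total number of such tests is $\Oh(\sum_{\text{queries}}\log n\log\log n)=\Oh(m\log n\log\log n)$ and the concatenation/IPM glue is $\Oh(1)$ each, the running time stays $\Oh((n+m)\log n\log\log n)$ and the space stays $\Oh(n+m)$, possibly after splitting the $m$ queries into batches of $\Oh(n)$ as in \cref{corMinOffline}. I would present this flattening explicitly since it is the one genuinely new ingredient beyond citing \cref{corMinOffline}; the rest is a direct parallel.
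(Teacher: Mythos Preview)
Your high-level plan (batch the WA queries, rebuild $\SeedSets$ level by level) is right, but the ``flatten the recursion into one sweep'' step has a genuine gap. The recursion in \cref{algo_allcov} is \emph{adaptive}: $B'=\refine(B)$ depends on $\chains(B)$, and $\chains(B)$ is only known after you have answered the $\IsCover(b_i,S[1\dd b_{i+1}])$ tests at the current level. So you cannot ``collect, across all $m$ queries and all $\Oh(\log\log n)$ recursion levels, every $(C,\ell,S)$ triple'' before touching $\SeedSets$; the triples at recursion depth $d+1$ are determined by the answers at depth $d$. The same dependency blocks $\computeUsing(B,C)$, which needs $C=\covers(B')$ from the deeper call before it can pose its own $\IsCover$ tests.

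The paper resolves this differently: it synchronizes the $m$ queries by recursion depth. For each depth $d=0,1,\dots,\Oh(\log\log n)$ it gathers, across all $m$ queries, the $\IsCover$ tests that $\chains(\cdot)$ (on the way down) and later $\computeUsing(\cdot,\cdot)$ (on the way up) produce at that depth; since within a single depth these tests are all determined, they can be answered with one level-by-level pass over $\SeedSets$ using off-line WA queries. This means $\SeedSets$ is rebuilt level-by-level $\Oh(\log\log n)$ times, costing $\Oh(n\log n\log\log n)$, which is within the stated bound. Note also that \cref{candidates} uses only $\IsCover$ queries; the $\LCovP$/\textsc{ApproxCoveredPref} machinery from \cref{corMinOffline} that you invoke is not needed here and complicates the picture unnecessarily.
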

\begin{proof}
As in the proof of \cref{corMinOffline}, we use off-line WA queries and we need to transform the query algorithm to make sure that the $\SeedSets$ tree is processed level by level. The data structure counterpart of \cref{periodic} uses only $\Oh(n)$ space and does not require any transformations. \cref{candidates} for computations on the $\SeedSets$ tree uses only $\IsCover$ queries, which are simpler than $\LCovP$ queries.

Each single such query can be naturally implemented by traversing the tree level by level. Moreover, in each recursive call of the algorithm of \cref{candidates}, all subsequent calls to $\IsCover$ in computing $\chains(B)$ and in the routine $\computeUsing(B,C)$ visit the tree level by level.

Hence, it suffices to simultaneously process the first recursive calls of all $\allcov$ queries, similarly for the second recursive calls etc, where the maximum depth of recursion is $\Oh(\log \log n)$. Thus we will construct the $\SeedSets$ tree, level by level, $\Oh(\log \log n)$ times, each time in $\Oh(n \log n)$ time and $\Oh(n)$ space. 
\end{proof}

\section{Final Remarks}
We showed an efficient data structure for computing internal covers. However, a similar problem for
seeds, which are another well-studied notion in quasiperiodicity, seems to be much harder. We pose the following question.

\medskip\noindent {\bf Open problem.}

\noindent
Can one answer internal queries related to  seeds  in $\Oh(\mbox{polylog\,}n)$ time after $\Oh(n\, \mbox{polylog\,}n)$ time
preprocessing?

\bibliographystyle{plainurl}
\bibliography{icover}

\end{document}